\documentclass[10pt, doublecolumn]{IEEEtran}
\usepackage{epsfig,latexsym}
\usepackage{float}
\usepackage{indentfirst}
\usepackage{amsmath}
\usepackage{bm}
\usepackage{amssymb}
\usepackage{times}
\usepackage{enumitem}
\usepackage{cite}
\usepackage{lastpage}
\usepackage{color} 
\usepackage{amsthm}
\usepackage{array}
\usepackage{algorithm}
\usepackage{algorithmic}
\usepackage{multirow}
\usepackage{empheq}

\newtheorem{lemma}{Lemma}

\begin{document}
\title{Pinching Antennas for Multiple Access in Multigroup Multicast Communications}
\author{Shan Shan, Chongjun Ouyang, Yong Li, and Yuanwei Liu
\thanks{Shan Shan and Yong Li are with the School of Information and Communication Engineering, Beijing University of Posts and Telecommunications, Beijing 100876, China (email: \{shan.shan, liyong\}@bupt.edu.cn). Chongjun Ouyang is with the School of Electronic Engineering and Computer Science, Queen Mary University of London, London E1 4NS, U.K. (email: c.ouyang@qmul.ac.uk). Yuanwei Liu is with the Department of Electrical and Electronic Engineering, The University of Hong Kong, Hong Kong (email: yuanwei@hku.hk).}
}
\IEEEaftertitletext{\vspace{-3em}}
 \maketitle
\begin{abstract} 
This paper aims to design multiple access (MA) schemes to improve the max-min fairness (MMF) for pinching antennas (PAs)-based multigroup multicast communications, where PA placement and resource allocation are jointly optimized. 
Specifically, three MA schemes are considered to facilitate the multicast transmission: i) treating interference as noise (TIN), ii) non-orthogonal multiple access (NOMA), and iii) time-division multiple access (TDMA) with two PA reconfiguration protocols, namely pinching switching (PS) and pinching multiplexing (PM). 
i) For TIN, a closed-form solution is derived for optimal power allocation, while a sequential element-wise optimization (SEO) is developed for the PA placement.
ii) For NOMA, a recursive power allocation framework incorporating a bisection search is developed, and a hierarchical objective evaluation (HOE) mechanism is incorporated to simplify the SEO process for PA location update.
iii) For TDMA, the PS protocol allows the PA locations to be optimized separately using the SEO method, after which the time-power allocation is solved as a convex problem with a global optimum. Under the PM protocol, the PA locations are jointly optimized with the time-power resources through a Karush-Kuhn-Tucker (KKT)-based analytical solution.
Numerical results demonstrate that: i) the pinching-antenna system (PASS) architecture significantly outperforms traditional fixed-antenna systems. ii) TDMA-PS achieves superior performance by fully leveraging the flexible PA reconfiguration and benefiting from interference-free transmission, whereas TIN serves as a practical lower-bound solution due to its simplicity despite its limited performance. iii) NOMA consistently outperforms TDMA-PM and, in high transmit power regimes with heterogeneous multicast group distributions, can even surpass the performance achieved by TDMA-PS.
\end{abstract}

\begin{IEEEkeywords}
Multigroup multicast communications, multiple access, pinching-antenna systems.
\end{IEEEkeywords}
\vspace{-10pt}
\section{Introduction}
Flexible-antenna systems have recently attracted extensive interest in wireless communications, given their superior capability to adaptively reconfigure the electromagnetic propagation environment~\cite{10910066, 10945421}. Representative architectures include fluid antennas~\cite{9264694, 9650760} and movable antennas~\cite{10286328, 10909572}, which dynamically adjust antenna positions to improve channel conditions. However, a fundamental limitation of these systems is that the antenna mobility is typically confined to a localized region spanning only a few wavelengths. While such local position refinement is effective in mitigating small-scale fading, it remains insufficient for compensating for severe large-scale path loss. Alternatively, reconfigurable intelligent surfaces (RISs) have been widely advocated to manipulate signal propagation via passive phase shifting~\cite{9424177 ,8741198}. While RISs are capable of establishing additional virtual line-of-sight (LoS) links for blocked users, they suffer from the severe double-fading effect. Specifically, the multiplicative path loss associated with the reflection link significantly restricts their energy efficiency over long distances. Consequently, the effective circumvention of large-scale path loss remains a persistent challenge for state-of-the-art flexible-antenna technologies.

In response to this challenge, the \emph{pinching-antenna system (PASS)} has recently emerged as a promising flexible-antenna paradigm~\cite{fukuda2022pinching, 10945421, 11169486}. PASS exploits low-loss dielectric waveguides for signal transmission, where dielectric perturbations, termed \emph{pinching antennas (PAs)}, can be flexibly deployed at arbitrary locations along an extended waveguide to couple electromagnetic energy into free space. Benefiting from the arbitrary long structure as well as the low attenuation characteristics of the waveguide, PASS can effectively bypass obstacles to establish strong LoS links with spatially dispersed users, thereby significantly mitigating large-scale path loss~\cite{11036558}. Furthermore, PASS exhibits superior scalability over conventional flexible-antenna architectures, as the addition or removal of PAs can be seamlessly implemented without altering the fundamental waveguide structure.
\vspace{-10pt}
\subsection{Related Works}
Motivated by these significant advantages, extensive research efforts have been dedicated to PASS. In particular, the foundational theoretical study in~\cite{10945421} investigated the fundamental transmission characteristics of PASS. Subsequently, \cite{10981775} analyzed the array gain of multi-PA scenarios and derived optimal deployment strategies regarding antenna number and inter-antenna spacing. From a signal modeling perspective, the authors in~\cite{wang2025modeling} introduced a physics-based directional coupler model for PA design, in which two wave-coupled power radiation models were derived based on electromagnetic theory.  Furthermore,~\cite{11263923} proposed an adjustable power radiation framework, which enables flexible radiation ratio control via coupling spacing adjustment to optimize discrete PA activation.

An inherent structural characteristic of PASS is that all PAs deployed along a single dielectric waveguide are fed by a common radio frequency (RF) chain, which results in the transmission of an identical signal stream. Consequently, extensive research efforts have been dedicated to investigating efficient multiple access (MA) techniques for multiuser communications using a single waveguide. 
For non-orthogonal multiple access (NOMA)-enabled PASS, the authors in~\cite{10912473, 11146800, 11204499, 11029492} focused on maximizing system throughput via joint PA placement and power allocation, whereas~\cite{11186151, 11131179} aimed to minimize the total power consumption.
Comparison between orthogonal multiple access (OMA) and NOMA has been conducted in~\cite{yanyu2025omanoma, qiaozi2025omanoma, 11016750, peng2025omanoma}. For instance, \cite{yanyu2025omanoma} analyzed the outage probability and diversity orders in a two-user scenario, while~\cite{qiaozi2025omanoma} and~\cite{peng2025omanoma} investigated power minimization and computation efficiency for PASS-enabled mobile edge computing (MEC) systems, respectively. 
These studies predominantly pursue performance metrics such as spectral efficiency or energy efficiency, which leaves user fairness largely unexplored. 
Towards this end, the authors in~\cite{11180028} introduced an orthogonal frequency-division multiple access (OFDMA)-based framework to achieve max-min fairness. In addition,~\cite{11114424} investigated fundamental physical-layer effects in PASS, including frequency-dependent waveguide attenuation and phase misalignment. 
Beyond conventional schemes, a PASS-specific environment-division multiple access (EDMA) technique was proposed in~\cite{zhiguo2025edma}, which optimizes PA locations to strategically obstruct interference paths via environmental blockage. 
Collectively, the aforementioned works validate the superiority of PASS in realizing flexible interference management and high-efficiency multiuser communications by leveraging its unique spatial reconfigurability.

\subsection{Motivations and Contributions}
Most existing studies on PASS have primarily focused on unicast transmission, where the single-feed waveguide architecture naturally supports the delivery of a single data stream. This structural characteristic also makes PASS inherently well-suited for physical-layer multicasting, since a common message can be efficiently conveyed through the shared electromagnetic waveguide to a group of spatially distributed users. This observation motivates the extension of PASS from unicast toward multicast transmissions, especially for emerging content-centric and group-oriented wireless services.
When multiple multicast groups are simultaneously served, however, the single-RF-chain architecture constrains the available spatial multiplexing degrees of freedom (DoFs), and inter-group interference becomes a dominant bottleneck. In such overloaded multigroup multicast scenarios, the design of suitable MA schemes plays a critical role in managing interference and ensuring reliable service delivery. Despite its practical importance, the behavior and performance of MA strategies under PASS-enabled multigroup multicast transmission remain insufficiently understood.

To address this knowledge gap, this work conducts a systematic investigation of three representative MA schemes for PASS-based multigroup multicast communications. The key contributions are summarized as follows:
\begin{itemize}
	\item We consider a PASS-enabled multigroup multicast system and develop a MA framework to manage inter-group interference. Within this framework, we explicitly characterize the minimum achievable group rate under three representative MA strategies, namely treating interference as noise (TIN), NOMA, and time-division multiple access (TDMA). For each strategy, we adopt a max-min fairness (MMF) objective and formulate a joint optimization problem in which the PA locations and the corresponding power and time allocation are optimized simultaneously.  

	\item We first analyze the widely recognized TIN scheme as a fundamental baseline, where inter-group interference is simply treated as additional noise at each receiver. Specifically, we derive a closed-form solution for the optimal power allocation and an analytical upper bound on the achievable multicast rate, thereby characterizing the performance limits imposed by inter-group interference in PASS. For PA placement, we show that under the MMF criterion, the optimal PA configuration is obtained by maximizing the harmonic mean of the channel gains, and this objective can be efficiently handled by a sequential element-wise optimization method.
	
	\item We then explore the NOMA scheme, where inter-group interference is managed by assigning a decoding order across groups and letting receivers decode and cancel the signals of some interfering groups via successive interference cancellation (SIC). For power allocation, we derive a closed-form optimal solution in the two-group scenario, while for the general multigroup case we exploit a recursive structure and embed the optimal power allocation into a bisection-based MMF rate search. For PA placement, SEO is employed in the two-group case, and in the multigroup case it is further combined with hierarchical objective evaluation (HOE) strategy to reduce the computational burden.

	\item We next investigate the TDMA scheme, where inter-group interference is eliminated by serving different multicast groups in orthogonal time slots. Two TDMA-based protocols are considered: PS and PM. In the PS protocol, the PA placement is decoupled from the power and time allocation. the PA locations of different groups are first optimized via sequential element-wise search to maximize their individual effective channel gains, and the remaining time–energy allocation is then solved as a convex optimization problem. In the PM protocol, a common PA configuration is shared across all groups. For the single-PA case, we derive closed-form expressions for the optimal power allocation and the PA location under equal time allocation. For multi-PA configurations, we adopt the HOE-based SEO algorithm of PA placements, where the group-specific power and time allocation is determined by a Karush-Kuhn-Tucker (KKT)-based resource allocation solver within each iteration.

	\item Finally, we provide numerical results to validate the effectiveness of the proposed schemes and to demonstrate the superiority of PASS over traditional fixed-array architectures in multigroup multicast scenarios. A comprehensive comparison across diverse system configurations further shows that: i) TDMA-PS is generally preferred because it eliminates inter-group interference and fully exploits group-wise spatial reconfigurability; ii) TIN serves as a fundamental performance lower bound while offering very low implementation complexity, which makes it attractive for systems with stringent hardware or processing constraints; and iii) NOMA consistently outperforms TDMA-PM and, in high-power regimes with spatially heterogeneous user distributions, emerges as a superior alternative that can even surpass the performance of TDMA-PS.
\end{itemize}

\subsection{Organization and Notations}
The remainder of this paper is organized as follows. Section~\ref{sec:System_Model} introduces the system model of PASS-enabled multigroup multicast communications and formulates the MMF problem. Sections~\ref{sec:TIN}, \ref{sec:NOMA}, and \ref{sec:TDMA} present the detailed designs and analyses for the TIN, NOMA, and TDMA schemes, respectively. Numerical results are presented in Section~\ref{sec:simulation}. Finally, Section~\ref{sec:conclusion} concludes this paper.

\subsubsection*{Notations}
Scalars, vectors, and matrices are represented by regular, bold lowercase, and  bold uppercase letters, respectively. The sets of complex numbers and real numbers are denoted by $\mathbb{C}$ and $\mathbb{R}$. The inverse and transpose operators are denoted by $(\cdot)^{-1}$ and $(\cdot)^{\sf T}$, respectively. For a vector ${\bf x}$, $[{\bf x}]_i$ denotes its $i$th element. ${\mathcal C}{\mathcal N}(a, b^2)$ denotes a circularly symmetric complex Gaussian distribution with mean $a$ and variance $b^2$. The statistical expectation operator is represented by ${\mathbb{E}}\{\cdot\}$. The absolute value and Euclidean norm are denoted by $|\cdot|$ and $\|\cdot\|$, respectively. The big-O notation is denoted by ${\mathcal O}(\cdot)$. 
\vspace{-10pt}
\section{System Model}\label{sec:System_Model}
\begin{figure}[!t]
\centering
\includegraphics[height=0.33\textwidth]{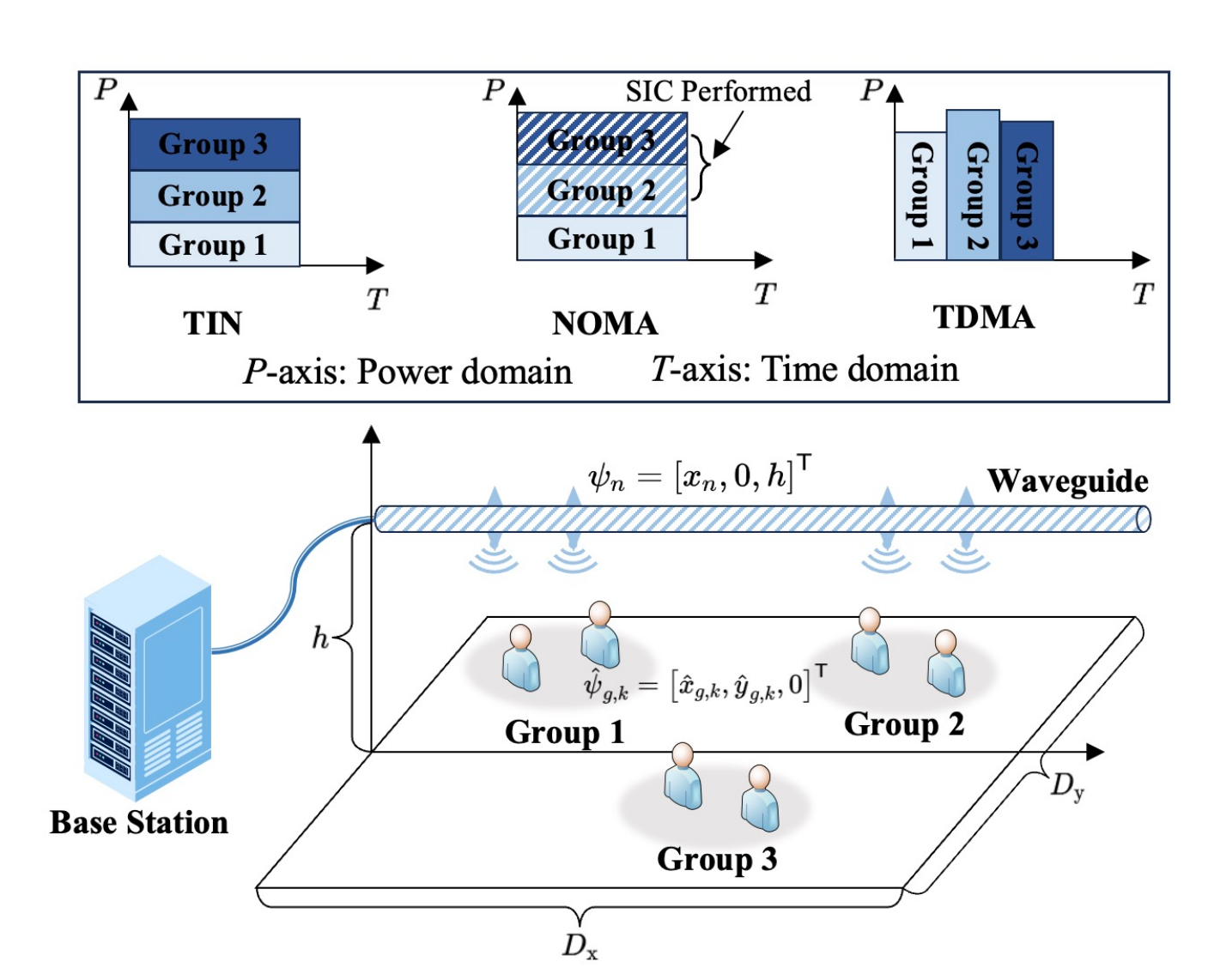}
\vspace{-10pt}
\caption{Illustration of the PASS-enabled multigroup multicast system and the associated multiple access strategies.}
\label{fig:System_Model}
\end{figure}
We consider a PASS-enabled multigroup multicast communication system. As illustrated in Fig. \ref{fig:System_Model}, the base station (BS) employs a single waveguide of length $D_{\rm x}$ to serve $G$ multicast groups. The waveguide is installed at a fixed height $h$, aligned parallel to the $x$-axis with $y_0$ being its $y$-coordinate. A single RF chain is connected to the waveguide, which up-converts the baseband signals to the carrier frequency and feeds them into the waveguide. The Cartesian coordinates of the feed point of the waveguide are denoted as ${\bm\psi}_{0} \!=\!\! [0, y_{0}, h]^{\sf T}$.
A total of $N$ PAs are activated along the waveguide, with the location of the $n$th PA denoted as ${\bm\psi}_{n} \!\!=\!\! [x_{n}, y_{0}, h]^{\sf T}$, where $x_n$ is its $x$-coordinate. Let ${\mathcal N}\!\!\triangleq\!\!\{1,\ldots,N\}$ denote the index set of the PAs. The vector of $x$-coordinates is defined as $\mathbf{x}\triangleq[x_{1}, \dots, x_{N}]^{\sf T} \in \mathbb{R}^{N \times 1}$, satisfying $0 \leqslant x_{1} < \dots < x_{N} \leqslant D_{\mathrm{x}}$. To mitigate electromagnetic mutual coupling, we enforce a minimum inter-PA spacing of $|x_{n} \!\!-\!\! x_{n-1}|\geq \Delta_{\sf min} = \lambda / 2$ for $n \!\!\in\!\! {\mathcal N}$ and $n\!\!>\!\!1$ \cite{10981775}, where $\lambda$ denotes the free-space wavelength.

In the multigroup multicast communication system, all users are distributed within a rectangular region of size $D = D_{\mathrm{x}} \times D_{\mathrm{y}}$. Users within the same group receive common desired message, while users across different groups are served with independent information. Let the set of multicast groups be denoted as $\mathcal{G}\triangleq\{1,2,\dots,G\}$. Assuming a total of $K$ users, the subset of users belonging to the group $g \in \mathcal{G}$ is denoted by $\mathcal{K}_{g}$, where each user is associated with exactly one group, i.e., $\mathcal{K}_{i} \cap \mathcal{K}_{j}=\emptyset, \forall i,j \in \mathcal{G},\ i \neq j$. The coordinate of the $k$th user in the $g$th group, denoted as the $(g,k)$th user, is given by ${\hat {\bm\psi}}_{g,k} = \left[{\hat x}_{g,k}, {\hat y}_{g,k}, 0\right]^{\sf T}$, $k\in{\mathcal K}_g.$
The in-waveguide signal propagation from the feed point ${\bm\psi}_0$ to the activated PAs is captured by the in-waveguide response vector
$\boldsymbol{\psi}(\mathbf{x})\in\mathbb{C}^{N\times 1}$, whose $n$th entry is given by
\begin{equation}\label{eq:psi_element}
[\boldsymbol{\psi}(\mathbf{x})]_n=\sqrt{\rho_n}\,{\rm e}^{-{\rm j}k_{\rm g}x_n},
\end{equation}
where $k_{\rm g}={2\pi}/{\lambda_{\rm g}}$ denotes the guided wavenumber,
$\lambda_{\rm g}=\lambda/n_{\rm eff}$ is the guided wavelength with $n_{\rm eff}$ being the effective refractive index~\cite{10945421},
and $\rho_n$ is the PA-dependent power-scaling factor.
For analytical simplicity, we assume equal power scaling across the $N$ PAs, i.e., $\rho_n=\rho=1/N$, $\forall n\in\mathcal{N}$~\cite{wang2025modeling}.

Given that PASS is a promising technology for high-frequency bands~\cite{fukuda2022pinching}, where LoS propagation typically dominates, we adopt a free-space LoS channel model.
Let $\mathbf{h}_{g,k}(\mathbf{x})\in\mathbb{C}^{N\times 1}$ denote the free-space channel vector from the $N$ PAs to user $(g,k)$, whose $n$th element is
\begin{equation}\label{eq:hk_element}
[\mathbf{h}_{g,k}(\mathbf{x})]_n
=\sqrt{\eta}\,D^{-1}_{g,k}(x_n)\,{\rm e}^{-{\rm j}k_0 D_{g,k}(x_n)},
\end{equation}
Here, $\eta=c^2/(16\pi^2 f_{\rm c}^2)$ with $c$ being the speed of light and $f_{\rm c}$ being the carrier frequency,
$k_0=2\pi/\lambda$ is the free-space wavenumber,
and $D_{g,k}(x_n)=\|{\hat{\bm\psi}}_{g,k}-{\bm\psi}_{n}\|$ represents the distance between the $n$th PA and user $(g,k)$.
Accordingly, the cascaded effective channel from the waveguide feed to user $(g,k)$ is defined as follows:
\begin{equation}\label{eq:effective_channel}
h_{g,k}(\mathbf{x}) \triangleq \mathbf{h}_{g,k}^{\sf T}(\mathbf{x})\,\boldsymbol{\psi}(\mathbf{x}) \in \mathbb{C}.
\end{equation}

Based on the proposed architecture, we investigate three representative MA schemes to support multigroup multicast transmissions: 1) TIN: All groups are served simultaneously on the same time-power resource, and each group decodes its intended stream by treating inter-group signals as noise. It offers a computationally efficient solution with low receiver complexity. 2) NOMA: Groups are superposed in the power domain and share the same time resources. The receivers apply successive interference cancellation (SIC) according to a decoding order, so the design jointly optimizes PA locations and power allocation to satisfy the signal-to-interference-plus-noise ratio (SINR) constraints induced by SIC. 3) TDMA: Groups are scheduled in orthogonal time slots with jointly optimized PA locations, per-slot power, and time allocation. Specifically, two protocols are considered, where PS allows slot-wise PA reconfiguration while PM uses a fixed PA placement for all slots.
In the following, we present the signal models for these three MA schemes.
\vspace{-10pt}
\subsection{Signal Transmission Model}\label{subsec:signal_model}
Let $s_g\in\mathbb{C}$ denote the independent Gaussian data symbol intended for multicast group $g$ with $\mathbb{E}\{|s_g|^2\}=1$, and let $P_g\in\mathbb{R}_+$ denote the transmit power allocated to the $g$th group. Define the power allocation vector as $\mathbf{p}\triangleq[P_1,\ldots,P_G]^{\sf T}$.

\subsubsection{TIN Scheme}\label{subsubsec:signal_tin}
Under TIN, the radiated signal is given by
\begin{equation}\label{eq:tx_TIN}
\mathbf{s}_{\sf TIN}=\boldsymbol{\psi}(\mathbf{x})\sum\nolimits_{g=1}^{G}\sqrt{P_g}\,s_g.
\end{equation}
The received signal at the $(g,k)$th user is
\begin{equation}\label{eq:rx_TIN}
y_{g,k}^{\sf TIN}=h_{g,k}(\mathbf{x})\sum\nolimits_{j=1}^{G}\sqrt{P_j}\,s_j+n_{g,k},
\end{equation}
where $n_{g,k}\sim\mathcal{CN}(0,\sigma_{g,k}^2)$ denotes the additive white Gaussian noise (AWGN) at the receiver. Accordingly, the SINR can be written as follows:
\begin{equation}\label{eq:SINR_TIN}
\gamma_{g,k}^{\sf TIN}(\mathbf{x},\mathbf{p})
=\frac{P_g|h_{g,k}(\mathbf{x})|^2}{\sum_{j\neq g}P_j|h_{g,k}(\mathbf{x})|^2+\sigma_{g,k}^2}.
\end{equation}

\subsubsection{NOMA Scheme}\label{subsubsec:signal_noma}
To implement NOMA, SIC is performed at the receivers, while all multicast groups are transmitted simultaneously via superposition coding as in \eqref{eq:tx_TIN}. For multigroup multicast, the decoding order is governed by the bottleneck channel condition of each group. Specifically, defining the user-wise channel-to-noise ratio (CNR) as $C_{g,k}(\mathbf{x}) \triangleq {|h_{g,k}(\mathbf{x})|^2}{\sigma_{g,k}^{-2}}$, the bottleneck group CNR is definded as follows:
\begin{equation}\label{eq:eff_channel_noma}
A_g(\mathbf{x}) \triangleq \min_{k\in\mathcal{K}_g} C_{g,k}(\mathbf{x}), \quad \forall g\in\mathcal{G}.
\end{equation}
Let $\bm{\pi}$ denote a decoding permutation that sorts the groups in ascending order of their bottleneck CNRs, i.e.,
\begin{equation}\label{eq:sorting}
A_{\pi(1)}(\mathbf{x}) \le A_{\pi(2)}(\mathbf{x}) \le \cdots \le A_{\pi(G)}(\mathbf{x}).
\end{equation}

According to the NOMA principle, after canceling the messages of weaker groups $\{\pi(1),\ldots,\pi(g-1)\}$, the SINR for user $k$ in group $\pi(g)$ to decode its intended message is
\begin{equation}\label{eq:SINR_NOMA}
\gamma_{\pi(g),k}^{\sf NOMA}(\mathbf{x},\mathbf{p})
=\frac{P_{\pi(g)} |h_{\pi(g),k}(\mathbf{x})|^2}
{\sum\nolimits_{j=g+1}^{G} P_{\pi(j)} |h_{\pi(g),k}(\mathbf{x})|^2 + \sigma_{\pi(g),k}^2 } .
\end{equation}
Since group $\pi(g)$ transmits a common codeword, its achievable rate is bottlenecked by the user with the minimum self-decoding SINR. Due to the monotonic relationship between the SINR in \eqref{eq:SINR_NOMA} and the CNR, this bottleneck user can be equivalently identified by
\begin{equation}\label{eq:bottleneck_user_noma}
k_g^\star \in \arg\min_{u\in\mathcal{K}_{\pi(g)}} C_{\pi(g),u}(\mathbf{x}).
\end{equation}

To successfully perform SIC, a user $(\pi(t),k)$ in a stronger group ($t \ge g$) must first decode the message of the weaker group $\pi(g)$ with the following SINR:
\begin{equation}\label{eq:SINR_cross}
\gamma^{\sf NOMA}_{\pi(t),k\rightarrow \pi(g)}(\mathbf{x},\mathbf{p})
=\frac{P_{\pi(g)}|h_{\pi(t),k}(\mathbf{x})|^2}
{\sum_{j=g+1}^{G}P_{\pi(j)}|h_{\pi(t),k}(\mathbf{x})|^2+\sigma_{\pi(t),k}^2}.
\end{equation}
The SIC feasibility requires that $\gamma^{\sf NOMA}_{\pi(t),k\rightarrow \pi(g)}(\mathbf{x},\mathbf{p}) \ge \gamma_{\pi(g),k_g^\star}^{\sf NOMA}(\mathbf{x},\mathbf{p})$ for all $t \ge g$. Notably, under the ascending CNR-based decoding order in \eqref{eq:sorting}, this feasibility condition is automatically satisfied. We retain its explicit formulation here solely for mathematical completeness.

\subsubsection{TDMA Scheme}\label{subsubsec:signal_tdma}
Under TDMA, groups are served over orthogonal time slots. Let $\tau_g\in(0,1]$ denote the time fraction assigned to group $g$, and define $\boldsymbol{\tau}\triangleq[\tau_1,\ldots,\tau_G]^{\sf T}$, which satisfies $\sum_{g=1}^{G}\tau_g\le 1$.
During the time slot of the $g$th group, the transmitted signal becomes
\begin{equation}\label{eq:tx_TDMA}
\mathbf{s}_{g}^{\sf TDMA}=\boldsymbol{\psi}(\mathbf{x})\sqrt{P_g}\,s_g,
\end{equation}
and the $(g,k)$th user receives
\begin{equation}\label{eq:rx_TDMA}
y_{g,k}^{\sf TDMA}=h_{g,k}(\mathbf{x})\sqrt{P_g}\,s_g+n_{g,k},
\end{equation}
which yields the signal-to-noise ratio (SNR) as follows\footnote{For the TDMA-PS scheme, the PA locations will be specialized to slot-dependent variables $\{{\bf x}_{g}\}$, whereas $\mathbf{x}$ here is used as a unified notation for the generic TDMA signal model.}:
\begin{equation}\label{eq:SNR_TDMA}
\gamma_{g,k}^{\sf TDMA}(\mathbf{x},\mathbf{p})
=\frac{P_g|h_{g,k}(\mathbf{x})|^2}{\sigma_{g,k}^2}.
\end{equation}

\vspace{-10pt}
\subsection{Max-Min Fairness Problem Formulation}\label{subsec:mmf_formulation}
Since each multicast group carries a common message, the achievable rate of group $g$ is determined by its worst user. 
To unify the MMF formulation across the considered MA schemes, we write the achievable group rate as follows:
\begin{equation}\label{eq:rate_general}
R_g(\mathbf{x},\mathbf{p},\boldsymbol{\tau})
=\tau_g\min_{k\in\mathcal{K}_g}\log_2\!\left(1+\gamma_{g,k}(\mathbf{x},\mathbf{p})\right),
\end{equation}
where $\gamma_{g,k}(\mathbf{x},\mathbf{p})$ is instantiated by \eqref{eq:SINR_TIN} for TIN, by \eqref{eq:SINR_NOMA} for NOMA, and by \eqref{eq:SNR_TDMA} for TDMA.
The time-allocation vector $\boldsymbol{\tau}$ serves as a generalized resource variable. Specifically, for TIN and NOMA, all groups occupy the full frame and thus $\tau_g=1$, $\forall g$. For TDMA, $\tau_g$ denotes the slot length with $\sum_{g=1}^{G}\tau_g\le 1$.

Our objective is to maximize the minimum achievable multicast rate among all groups. The joint design of PA locations and resource allocation is formulated as follows:
\begin{subequations}\label{prob:MMF_general}
\begin{align}
\mathcal{P}_0:\ 
&\max_{\mathbf{x}\in\mathcal{X}, \mathbf{p},\boldsymbol{\tau}}
\ \min_{g\in\mathcal{G}} \ R_g(\mathbf{x},\mathbf{p},\boldsymbol{\tau})\\
\mathrm{s.t.}\ 
&\sum\nolimits_{g=1}^{G} \tau_g P_g \le P_{\rm t}, \quad  P_g \ge 0,\ \forall g,\label{st:power} \\
&\sum\nolimits_{g=1}^{G} \tau_g \le 1, \quad 0<\tau_g\le 1,\ \forall g, \label{st:time} 
\end{align}
\end{subequations}
where $P_{\rm t}$ denotes the total transmit power budget normalized over one frame.
Moreover, $\mathcal{X}$ denotes the feasible set of PA locations specified by the waveguide aperture and the minimum inter-PA spacing constraint, i.e.,
\begin{align}\label{eq:X_feasible}
\mathcal{X} \!\!\triangleq\! \left\{ \!\mathbf{x} \!\in \!\mathbb{R}^N \Big| \
\begin{aligned}
& 0 \le x_n \le D_{\rm x},\ \forall n\in\mathcal{N},\\
& x_n - x_{n-1} \ge \Delta_{\sf min},\ \forall n\in\mathcal{N},\ n\ge 2
\end{aligned}
\right\}.
\end{align}

Problem~\eqref{prob:MMF_general} is a non-convex and non-smooth optimization problem, complicated by: i) the non-convexity of the objective function, which arises from the intricate coupling between resource variables ($\mathbf{p}, \boldsymbol{\tau}$) and the highly non-linear PA position-dependent channel gains $|h_{g,k}(\mathbf{x})|^2$; and ii) the non-smoothness induced by the nested multicast bottleneck $\min_{k\in\mathcal{K}_g}(\cdot)$ and group fairness $\min_{g\in\mathcal{G}}(\cdot)$ operators. 
To address these challenges, we develop joint PA locations and resource allocation design for TIN, NOMA, and TDMA schemes in the following sections to achieve the MMF objective.
\vspace{-10pt}
\section{TIN-Based Transmission}\label{sec:TIN}
We first study the TIN scheme, where each user decodes only its intended multicast signal while treating the inter-group interference as additional noise.
Moreover, for a given interference-plus-noise covariance, modeling the residual interference as \emph{Gaussian} is a conservative choice, since Gaussian noise minimizes the mutual information among all additive noises. Therefore, treating interference as Gaussian noise yields a worst-case achievable-rate lower bound~\cite{Hassibi2003HowMuchTraining}.
Accordingly, the received SINR of the $(g,k)$th user in \eqref{eq:SINR_TIN} can be equivalently rewritten as follows:
\begin{equation}\label{eq:SINR_TIN_equiv}
\gamma_{g,k}^{\sf TIN}(\mathbf{x},\mathbf{p})
={P_g}{\Big(\sum_{j\neq g}P_j+\sigma_{g,k}^2/\big|h_{g,k}(\mathbf{x})\big|^2\Big)^{-1}},
\end{equation}
which highlights that, under the single-waveguide architecture, the inter-group interference at a given user scales identically with $\big|h_{g,k}(\mathbf{x})\big|^2$.
Accordingly, problem~\eqref{prob:MMF_general} is specialized to the TIN scheme as follows:
\begin{subequations}\label{prob:MMF_TIN}
\begin{align}
\mathcal{P}^{\sf TIN}:
\max_{\mathbf{x}\in\mathcal{X},\,\mathbf{p}}~~
& \min_{g\in\mathcal{G}}~ R_g^{\sf TIN}(\mathbf{x},\mathbf{p}) \\
\text{s.t.}~~
& \sum\nolimits_{g=1}^{G} P_g \le P_{\rm t}, \ P_g\ge 0,\ \forall g\in\mathcal{G},
\end{align}\end{subequations}
where
\begin{align}
	R_g^{\sf TIN}(\mathbf{x},\mathbf{p})
= \min_{k\in\mathcal{K}_g}\log_2\!\left(1+\gamma_{g,k}^{\sf TIN}(\mathbf{x},\mathbf{p})\right).
\end{align}
Although the time-allocation vector ${\boldsymbol{\tau}}$ is absent, problem~\eqref{prob:MMF_TIN} remains non-convex due to the coupled interference structure in \eqref{eq:SINR_TIN_equiv} and the highly non-linear dependence of $|h_{g,k}(\mathbf{x})|^2$ on the PA locations. 
In what follows, we first obtain the optimal MMF power allocation in closed form for a given $\mathbf{x}$, and then optimize $\mathbf{x}$ via a sequential element-wise search.
\vspace{-10pt}
\subsection{Optimal Power Allocation Strategy}
Given a fixed PA configuration $\mathbf{x}$, the original problem reduces to a MMF power allocation problem. 
Since the multicast rate of the $g$th group is limited by its worst user, the MMF objective depends on each group only through the bottleneck CNR $A_g(\mathbf{x})$. 
By substituting \eqref{eq:eff_channel_noma} into \eqref{eq:SINR_TIN_equiv}, the original MMF power-allocation subproblem can be reduced to an equivalent MMF problem for $G$ \emph{virtual} users, where the effective SNR of the $g$th group is given by
\begin{equation}\label{eq:gamma_virtual_TIN}
\gamma_g(\mathbf{p})\!= \frac{P_g}{\sum_{j\neq g} P_j + A_g^{-1}(\mathbf{x})}
= \frac{P_g A_g(\mathbf{x})}{\big(P_{\rm t}-P_g\big)A_g(\mathbf{x}) + 1}.
\end{equation}
Accordingly, we derive the closed-form optimal power allocation in the following lemma.

\begin{lemma}\label{prop:closed_form_power_TIN}
For a fixed PA placement vector $\mathbf{x}$, the optimal MMF power allocation policy $\mathbf{p}^{\star}$ under the TIN scheme is given in a closed form. Specifically, the optimal equalized TIN SINR for each group can be written as follows:
\begin{equation}\label{eq:opt_gamma}
\gamma^{\sf TIN, \star}(\mathbf{x}) = \Big({\sum\nolimits_{g=1}^{G} \left(1 + {P^{-1}_{\rm t} A^{-1}_g(\mathbf{x})}\right) - 1}\Big)^{-1},
\end{equation}
and the optimal transmit power allocated to the $g$th group is
\begin{equation}\label{eq:opt_power}
P_g^{\sf TIN, \star} = \frac{\gamma^{\sf TIN, \star}(\mathbf{x})\left( P_{\rm t} + {A^{-1}_g(\mathbf{x})} \right)}{1+\gamma^{\sf TIN, \star}(\mathbf{x})}.
\end{equation}
\end{lemma}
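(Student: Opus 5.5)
We work with the reduced virtual-user problem \eqref{eq:gamma_virtual_TIN}, where each group is represented by its bottleneck CNR $A_g(\mathbf{x})$. The first step justifies this reduction. Since $\log_2(1+\cdot)$ is increasing, and for fixed $\mathbf{p}$ the SINR in \eqref{eq:SINR_TIN_equiv} is increasing in $|h_{g,k}(\mathbf{x})|^2/\sigma_{g,k}^2$, the worst user of group $g$ is the one attaining $A_g(\mathbf{x})$. This user does not depend on $\mathbf{p}$. Hence $\mathcal{P}^{\sf TIN}$ with $\mathbf{x}$ fixed is equivalent to maximizing $\min_g \gamma_g(\mathbf{p})$, with $\gamma_g(\mathbf{p}) = P_g/(S-P_g+A_g^{-1})$ and $S=\sum_j P_j\le P_{\rm t}$.

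The second step shows that the power constraint is tight. For any feasible $\mathbf{p}$ with $S<P_{\rm t}$, scale it to $c\mathbf{p}$ with $c=P_{\rm t}/S>1$. Then
\begin{equation*}
\gamma_g(c\mathbf{p})=\frac{P_g}{S-P_g+(cA_g)^{-1}}>\gamma_g(\mathbf{p})
\end{equation*}
for every group with $P_g>0$. Groups with $P_g=0$ have zero rate, so such a $\mathbf{p}$ is suboptimal anyway, since equal power gives all groups positive SINR. We may therefore impose $S=P_{\rm t}$, which yields the second form in \eqref{eq:gamma_virtual_TIN}.

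The third step shows that all SINRs are equal at the optimum. Each $\gamma_g$ is strictly increasing in $P_g$ on $[0,P_{\rm t}]$. This follows from $\gamma_g = P_gA_g/((P_{\rm t}-P_g)A_g+1)$: the numerator increases and the denominator decreases. Suppose some group $g$ has $\gamma_g>\min_j\gamma_j$. Moving a small amount of power from $g$, split equally among all minimizing groups, raises the minimum by continuity. So at the optimum every $\gamma_g$ equals a common value $\gamma^\star$. We also need existence and uniqueness. The feasible set is compact and $\min_g\gamma_g$ is continuous, so an optimum exists. The equalized value is unique because each $\gamma_g$ is strictly increasing in $P_g$; this uniqueness also follows from the explicit formula below.

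The fourth step solves for $\gamma^\star$. Setting $\gamma_g=\gamma^\star$ and solving $P_g=\gamma^\star\big(P_{\rm t}-P_g+A_g^{-1}\big)$ for $P_g$ gives
\begin{equation*}
P_g=\frac{\gamma^\star\big(P_{\rm t}+A_g^{-1}\big)}{1+\gamma^\star},
\end{equation*}
which is \eqref{eq:opt_power}. Summing over $g$ and imposing $\sum_g P_g=P_{\rm t}$ gives
\begin{equation*}
\gamma^\star\sum_g\big(P_{\rm t}+A_g^{-1}\big)=P_{\rm t}(1+\gamma^\star).
\end{equation*}
Dividing by $P_{\rm t}$ and rearranging gives
\begin{equation*}
\gamma^\star\Big(\sum_g\big(1+P_{\rm t}^{-1}A_g^{-1}\big)-1\Big)=1,
\end{equation*}
which is \eqref{eq:opt_gamma}. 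The bracketed quantity is at least $G-1+\sum_g P_{\rm t}^{-1}A_g^{-1}>0$, so $\gamma^\star$ is positive and finite. The resulting $P_g$ are positive, so the candidate is feasible.

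The main obstacle is the equalization argument in the third step. It requires care to show that the optimum lies at equality for all groups, rather than merely at equality for the minimizing groups. The cleanest route is to combine strict monotonicity of each $\gamma_g$ in $P_g$ under the fixed total $P_{\rm t}$ with a power-transfer contradiction.
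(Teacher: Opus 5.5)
Your proof is correct, and it takes a genuinely different route from the paper's. Appendix~\ref{app:proof_prop1} writes the problem in epigraph form, with the SINR already expressed as $P_gA_g/((P_{\rm t}-P_g)A_g+1)$, and forms a Lagrangian. It then asserts convexity, so that the KKT conditions are necessary and sufficient. Finally, it uses $\lambda_g=\mu/(A_g(1+t))$ with $\mu>0$ to force, via complementary slackness, every SINR constraint and the budget to be active. The closing equalize-and-sum algebra coincides with your fourth step.

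You instead obtain the two structural facts (full budget, equal SINRs) by elementary exchange arguments. These are the scaling $\mathbf{p}\mapsto(P_{\rm t}/S)\mathbf{p}$ and the power transfer on the simplex, where each $\gamma_g$ depends only on $P_g$. You then close with existence by compactness and uniqueness of the equalized point.

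What this buys is rigor. The function $P_gA_g/((P_{\rm t}-P_g)A_g+1)$ is strictly convex in $P_g$, so the paper's epigraph constraint (bilinear in $(t,P_g)$) defines a non-convex set. Its appeal to KKT sufficiency therefore does not hold as stated. The argument can be repaired through existence of an optimum, a constraint qualification such as LICQ making KKT necessary, and uniqueness of the KKT point. Likewise, $\mu>0$ should be derived from the $t$-stationarity condition rather than from $P_{\rm t}>0$. The paper does neither. Your scaling step also justifies the substitution $\sum_{j\neq g}P_j=P_{\rm t}-P_g$ in \eqref{eq:gamma_virtual_TIN}, which the paper builds into its formulation without comment. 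The KKT route, for its part, is shorter and is the same template the paper reuses for Lemma~\ref{prop:TDMA_closed_form}.
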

\begin{IEEEproof}
Please refer to Appendix~\ref{app:proof_prop1} for more details.
\end{IEEEproof}

Lemma~\ref{prop:closed_form_power_TIN} reveals an intrinsic performance upper bound of the TIN-based transmission. 
Specifically, in the high-SNR regime where $P_{\rm t}$ is sufficiently large such that $P_{\rm t}A_g(\mathbf{x})\!\gg\!1$ for all $g\in\mathcal{G}$, the system becomes interference-limited rather than noise-limited. In this asymptotic regime, the noise terms in \eqref{eq:opt_gamma} become negligible, and the {optimal equalized TIN SINR saturates to $\gamma^{\sf TIN, \star}\approx \frac{1}{G-1}$. 
Consequently, the achievable multicast rate is theoretically bounded by
\begin{equation}\label{eq:tin_ceiling_rate}
\min_{g\in\mathcal{G}}~ R_g^{\sf TIN}(\mathbf{x},\mathbf{p}^{\star})
\le
\log_2\!\left(1+({G-1})^{-1}\right).
\end{equation}
This upper bound depends solely on the number of multicast groups $G$ and is independent of the PA locations $\mathbf{x}$. This implies that in the high-power regime, optimizing PA locations yields diminishing gains, as the performance is bottlenecked by the unavoidable inter-group interference inherent to the superposition transmission.
\vspace{-10pt}
\subsection{Sequential Element-Wise PA Location Optimization}\label{subsec:algorithm}
With the closed-form optimal power allocation $\mathbf{p}^{\star}$ derived in Lemma~\ref{prop:closed_form_power_TIN}, the original joint optimization problem $\mathcal{P}^{\sf TIN}$ can be reduced to a problem solely dependent on the PA locations $\mathbf{x}$.
Substituting $\gamma^{\sf TIN, \star}(\mathbf{x})$ from \eqref{eq:opt_gamma} into the objective of \eqref{prob:MMF_TIN}, maximizing the multicast rate is equivalent to minimizing the denominator term in \eqref{eq:opt_gamma}.
Consequently, the PA location optimization problem can be reformulated as follows:
\begin{align}\label{prob:x_obj}
\min_{\mathbf{x}\in\mathcal{X}} \,\, f_A(\mathbf{x}) \triangleq \sum\nolimits_{g=1}^{G} {A^{-1}_g(\mathbf{x})}.
\end{align}

Problem \eqref{prob:x_obj} involves optimizing the continuous-valued vector $\mathbf{x}$ over a highly non-convex landscape characterized by severe fluctuations arising from the superposition of multiple sinusoidal components. To circumvent the high computational complexity of continuous search, we discretize the waveguide aperture into a finite set of candidate locations. Specifically, we define a uniform grid $\mathcal{S}_x$ with $L$ sampling points as follows:
\begin{equation}
\mathcal{S}_x = \left\{ 0, \frac{D_{\rm x}}{L-1}, \frac{2 D_{\rm x}}{L-1}, \ldots, D_{\rm x} \right\}.
\end{equation}
Based on this discretization, we propose an element-wise sequential optimization strategy to iteratively refine the PA locations. In each iteration, we update the $n$th PA location $x_n$, while fixing the positions of the other $N-1$ PAs, denoted by $\mathbf{x}_{-n}$. Accordingly, the specific feasible set for updating $x_n$ is defined as follows:
\begin{align}\label{eq:feasible_set_new}
\mathcal{X}_n = \left\{ x \in \mathcal{S}_x \;\Big|\; |x - x_j| \ge \Delta_{\sf min}, \forall x_j \in \mathbf{x}_{-n} \right\}.
\end{align}
For a fixed $\mathbf{x}_{-n}$, the aggregate channel $h_{g,k}(\mathbf{x})$ can be efficiently decomposed into a fixed component $\bar{h}_{g,k}^{(n)}$ (contributed by $\mathbf{x}_{-n}$) and a variable component dependent on $x_n$, i.e., 
\begin{equation}
h_{g,k}(x_n) = \bar{h}_{g,k}^{(n)} + \frac{\sqrt{\eta/N}}{D_{g,k}(x_n)} {\rm e}^{-{\rm j}(k_0 D_{g,k}(x_n)+k_{\rm g}x_n)}.
\end{equation}
The subproblem for optimizing $x_n$ is then formulated as follows:
\begin{equation}\label{prob:1D_search}
x_n^{\star} = \arg \min_{x \in \mathcal{X}_n} \sum_{g=1}^{G} \Big( \max_{k\in\mathcal{K}_g} {\sigma_{g,k}^{2}}{\big|h_{g,k}(x)\big|^{-2}} \Big).
\end{equation}
Since the subproblem \eqref{prob:1D_search} involves only a single scalar variable within a finite discrete set $\mathcal{X}_n$, the global optimum of this subproblem can be efficiently obtained via a one-dimensional grid search. The detailed implementation of the optimization procedure is summarized in Algorithm~\ref{alg:bcd_algorithm}.

The convergence is guaranteed by the monotonicity and boundedness of the objective function. In each iteration, the update of $x_n$ via \eqref{prob:1D_search} ensures a non-increasing objective value, i.e., $f_A(\mathbf{x}^{(t+1)}) \le f_A(\mathbf{x}^{(t)})$. Since $f_A(\mathbf{x})$ is lower-bounded by zero, the iterative sequence  converges to a coordinate-wise local optimum on the discretized set. Regarding complexity, each iteration entails $N$ grid searches of size $L$, where evaluating the objective function requires accounting for all $K$ users. Thus, the total complexity is $\mathcal{O}(I_{\rm iter} N L K)$, which is linear in $N$ and significantly lower than the exponential complexity $\mathcal{O}(K L^N)$ of exhaustive search.

\begin{algorithm}[t]
\caption{Power Allocation and PA Location Optimization for TIN Scheme}
\label{alg:bcd_algorithm}
\begin{algorithmic}[1]
\REQUIRE Power budget $P_{\rm t}$, tolerance $\epsilon$, candidate set $\mathcal{S}_x$, and an initial feasible $\mathbf{x}^{(0)}\in\mathcal{X}$.
\STATE $t\leftarrow 0$, compute $f_{A}(\mathbf{x}^{(0)})$ by \eqref{prob:x_obj}.
\REPEAT
\STATE $t\leftarrow t+1$.
\FOR{$n=1$ to $N$}
\STATE Update $x_n$ by solving \eqref{prob:1D_search}.
\ENDFOR
\STATE Compute $f_{A}(\mathbf{x}^{(t)})$ by \eqref{prob:x_obj}.
\UNTIL{$|f_{A}(\mathbf{x}^{(t)})-f_{A}(\mathbf{x}^{(t-1)})|\le \epsilon$}
\STATE Set $\mathbf{x}^\star\leftarrow \mathbf{x}^{(t)}$.
\STATE Compute $\gamma^{\sf TIN, \star}(\mathbf{x}^\star)$ by \eqref{eq:opt_gamma} and recover $\mathbf{p}^\star$ by \eqref{eq:opt_power}.
\end{algorithmic}
\end{algorithm}

\section{NOMA-Based Transmission}\label{sec:NOMA}
While the TIN scheme is appealing for its low receiver complexity, the MMF performance is severely limited when inter-group interference dominates, since all groups are superposed over the same resource.
To better control such interference without changing the time resource allocation, we adopt power-domain NOMA and exploit SIC, and investigate the resulting joint PA placement and power allocation design.
In what follows, we focus on the NOMA-based multigroup multicast transmission under the MMF objective.

Following the unified MMF formulation in~\eqref{prob:MMF_general}, NOMA occupies the entire frame and thus $\tau_g=1$, $\forall g$. Moreover, the SINR term in \eqref{eq:rate_general} is instantiated by \eqref{eq:SINR_NOMA} under a decoding order $\bm{\pi}$.
Consequently, problem~\eqref{prob:MMF_general} specializes to
\begin{align}\label{prob:MMF_NOMA}
\mathcal{P}^{\sf NOMA}: \max_{\mathbf{x}\in\mathcal{X},\,\mathbf{p}\succeq\mathbf{0}}\ \ &
\min_{g\in\mathcal{G}} \ R_{\pi(g)}^{\sf NOMA}(\mathbf{x},\mathbf{p}),
\quad \text{s.t.} \ \eqref{st:power},
\end{align}
where $R_{\pi(g)}^{\sf NOMA}(\mathbf{x},\mathbf{p})
=\min_{k \in \mathcal{K}_{\pi(g)}} \log_2 \!\left( 1 + \gamma_{\pi(g),k}^{\sf NOMA}(\mathbf{x},\mathbf{p}) \right)$.

Problem~\eqref{prob:MMF_NOMA} is challenging because the decoding permutation $\bm{\pi}$ is coupled with the PA placement $\mathbf{x}$ through the group bottleneck CNRs in \eqref{eq:eff_channel_noma}, which yields a sorting-dependent and hence generally non-smooth objective landscape.
To gain design insights and provide a low-complexity benchmark, we first study the fundamental two-group multicast scenario.

\vspace{-10pt}
\subsection{Two-Group Multicast Scenario}
In the two-group case, the optimal power allocation $\mathbf{p}^{\star}$ can be determined analytically with any PA placements, which allows us to reduce the original joint optimization problem into a problem solely dependent on $\mathbf{x}$.

\subsubsection{Optimal Power Allocation Strategy}
We identify the weak group index $g_{\rm w}$ and the strong group index $g_{\rm s}$ such that their effective CNRs satisfy $A_{g_{\rm s}}(\mathbf{x}) \ge A_{g_{\rm w}}(\mathbf{x})$.

\begin{lemma}\label{prop:G2_closed_form}
For a fixed $\mathbf{x}$, the optimal power allocated to the strong group, denoted by $P^{\star}_{g_{\rm s}}$, can be derived as follows:
\begin{equation}\label{eq:Ps_val}
P^{\star}_{g_{\rm s}} = \frac{\sqrt{(A_{g_{\rm s}} + A_{g_{\rm w}})^2 + 4 P_{\rm t} A_{g_{\rm s}} A_{g_{\rm w}}^2}-(A_{g_{\rm s}} + A_{g_{\rm w}})}{2 A_{g_{\rm s}} A_{g_{\rm w}}}.
\end{equation}
Consequently, the power allocated to the weak group is given by $P^{\star}_{g_{\rm w}} = P_{\rm t} - P^{\star}_{g_{\rm s}}$.
\end{lemma}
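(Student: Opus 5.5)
The plan is to reduce the two-group NOMA power allocation to the balancing of two scalar SINRs, which are monotone in opposite directions in a single variable $P_{g_{\rm s}}$.

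\textbf{Step 1: effective SINRs.} With decoding order $\pi(1)=g_{\rm w}$, $\pi(2)=g_{\rm s}$, the bottleneck user of each group is the one attaining $A_g(\mathbf{x})$, because both SINRs in \eqref{eq:SINR_NOMA} increase in the CNR. The SIC feasibility condition holds automatically under \eqref{eq:sorting}, as the paper notes. Hence the two group rates are monotone functions of
\begin{equation*}
\gamma_{\rm w}=\frac{P_{g_{\rm w}}A_{g_{\rm w}}}{P_{g_{\rm s}}A_{g_{\rm w}}+1},\qquad \gamma_{\rm s}=P_{g_{\rm s}}A_{g_{\rm s}}.
\end{equation*}

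\textbf{Step 2: reduce to one variable.} Both $\gamma_{\rm w}$ and $\gamma_{\rm s}$ are nondecreasing when total power is scaled up. So at the optimum the power constraint is tight, $P_{g_{\rm w}}=P_{\rm t}-P_{g_{\rm s}}$. As $P_{g_{\rm s}}$ runs over $[0,P_{\rm t}]$:
\begin{itemize}
\item $\gamma_{\rm s}$ increases strictly from $0$;
\item $\gamma_{\rm w}=(P_{\rm t}-P_{g_{\rm s}})A_{g_{\rm w}}/(P_{g_{\rm s}}A_{g_{\rm w}}+1)$ decreases strictly to $0$.
\end{itemize}
By continuity the two curves cross exactly once. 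The max-min of the two rates, both with $\tau_g=1$, is attained at this crossing: moving away from it lowers one of the two SINRs and therefore the minimum.

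\textbf{Step 3: solve the balancing equation.} Setting $\gamma_{\rm s}=\gamma_{\rm w}$ and clearing denominators gives
\begin{equation*}
A_{g_{\rm s}}A_{g_{\rm w}}P_{g_{\rm s}}^2+(A_{g_{\rm s}}+A_{g_{\rm w}})P_{g_{\rm s}}-P_{\rm t}A_{g_{\rm w}}=0.
\end{equation*}
The constant term is negative, so there is exactly one positive root. The quadratic formula yields \eqref{eq:Ps_val}, and $P^\star_{g_{\rm w}}=P_{\rm t}-P^\star_{g_{\rm s}}$ follows from the tight power constraint.

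\textbf{Step 4: check the root lies in $[0,P_{\rm t}]$.} At $P_{g_{\rm s}}=P_{\rm t}$ the left-hand side of the quadratic equals
\begin{equation*}
A_{g_{\rm s}}A_{g_{\rm w}}P_{\rm t}^2+A_{g_{\rm s}}P_{\rm t}>0,
\end{equation*}
so the positive root is below $P_{\rm t}$.

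\textbf{Main care point.} The substantive step is justifying that balancing is optimal, namely the opposite-monotonicity argument in Step 2. It must be combined with the remark that the decoding order, and hence the labels $g_{\rm s}$ and $g_{\rm w}$, is fixed by $\mathbf{x}$ and does not depend on $\mathbf{p}$. The remaining algebra is routine.
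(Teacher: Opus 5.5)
Your proof is correct and follows the same route as the paper's proof: make the power constraint tight, equate $P_{g_{\rm s}}A_{g_{\rm s}}$ with $P_{g_{\rm w}}A_{g_{\rm w}}/(P_{g_{\rm s}}A_{g_{\rm w}}+1)$, reduce to the quadratic $A_{g_{\rm s}}A_{g_{\rm w}}P_{g_{\rm s}}^2+(A_{g_{\rm s}}+A_{g_{\rm w}})P_{g_{\rm s}}-P_{\rm t}A_{g_{\rm w}}=0$, and take its unique positive root. The only difference is that you justify the two optimality conditions (scaling and opposite monotonicity) and check that the root lies in $(0,P_{\rm t})$, whereas the paper simply asserts these conditions from the MMF criterion.
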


\begin{IEEEproof}
Please refer to Appendix~\ref{app:proof_prop_G2} for more details.
\end{IEEEproof}

Substituting the optimal power solution $P^{\star}_{g_{\rm s}}$ from \eqref{eq:Ps_val} back into the SINR expression $\gamma^{{\sf NOMA},\star} = P^{\star}_{g_{\rm s}} A_{g_{\rm s}}$, we obtain the explicit SINR solely in terms of channel gains, which can be written as follows:
\begin{equation}\label{eq:gamma_explicit}
\gamma^{{\sf NOMA},\star} \!=\! \frac{\sqrt{(A_{g_{\rm s}} \!\!+\! A_{g_{\rm w}})^2 \!+\! 4 P_{\rm t} A_{g_{\rm s}} A_{g_{\rm w}}^2}\!\!-\!(A_{g_{\rm s}} \!+\! A_{g_{\rm w}})}{2 A_{g_{\rm w}}}.
\end{equation}
This explicit form reveals a non-linear coupling between the strong and weak groups. 
Specifically, in the high-SNR regime (i.e., $P_{\rm t} \to \infty$), the term $4 P_{\rm t} A_{g_{\rm s}} A_{g_{\rm w}}^2$ dominates the radical, yielding the approximation as follows:
\begin{equation}
\gamma^{{\sf NOMA},\star}(\mathbf{x}) \approx \frac{\sqrt{4 P_{\rm t} A_{g_{\rm s}}(\mathbf{x}) A_{g_{\rm w}}^2(\mathbf{x})}}{2 A_{g_{\rm w}}(\mathbf{x})} = \sqrt{P_{\rm t} A_{g_{\rm s}}(\mathbf{x})}.
\end{equation}

Consequently, the asymptotic multicast rate scales as ${\sf R}^{{\sf NOMA}, \star}(\mathbf{x}) \approx \frac{1}{2} \log_2(P_{\rm t} A_{g_{\rm s}}(\mathbf{x}))$. 
This offers a critical design insight: the multicast rate benefits significantly from the strong group's channel gain $A_{g_{\rm s}}(\mathbf{x})$. 
For the proposed PASS architecture, this characteristic provides a unique optimization flexibility: rather than strictly maximizing the weak group's channel gain $A_{g_{\rm w}}$, the system can alternatively enhance the strong group's channel $A_{g_{\rm s}}$ via PA location reconfiguration to indirectly improve the global fairness rate.

\subsubsection{Element-Wise Sequential PA Location Optimization}
With the optimal power allocation $\mathbf{p}^{\star}(\mathbf{x})$ explicitly given by Lemma~\ref{prop:G2_closed_form}, the original joint optimization problem can be reduced to a problem dependent solely on the PA locations $\mathbf{x}$. 
Specifically, the achievable MMF rate is determined by the balanced SINR, i.e., $\gamma^{{\sf NOMA},\star}(\mathbf{x}) = P_{g_{\rm s}}^{\star}(\mathbf{x}) A_{g_{\rm s}}(\mathbf{x})$. 

For computational efficiency, we adopt the element-wise sequential optimization framework as in Section~III-B. The subproblem for $x_n$ under the NOMA scheme is formulated as follows:
\begin{equation}\label{prob:1D_search_NOMA}
x_n^{\star} = \arg \max_{x \in \mathcal{X}_n} \ P_{g_{\rm s}}^{\star}(\mathbf{x}|_{x_n=x})A_{g_{\rm s}}(\mathbf{x}|_{x_n=x}),
\end{equation}
Since \eqref{prob:1D_search_NOMA} is a univariate optimization problem with an explicit objective function, it can be efficiently solved via one-dimensional grid search.

\vspace{-10pt}
\subsection{General Multicast Scenario}
For the general scenario with $G > 2$ multicast groups, the power allocation problem no longer admits a quadratic closed-form solution. Consequently, the explicit objective function derived for $G=2$ cannot be directly applied. 
To jointly optimize the power allocation and PA locations, we propose an element-wise alternating optimization (AO) framework. Specifically, the PA locations are optimized sequentially, where the optimal power allocation is determined via bisection search within each element-wise update.

\subsubsection{Power Allocation Strategy}
To determine the maximum achievable multicast rate ${\sf R}^{{\sf NOMA}, \star}$, we leverage the monotonic relationship between the data rate and the total transmit power. Specifically, for a target SINR $\gamma$, the minimum required power for group $\pi(k)$ can be derived recursively from the strongest group to the weakest group as follows:
\begin{equation}\label{eq:power_recursive}
P_{\pi(k)}(\gamma) = \gamma \Big( {A^{-1}_{\pi(k)}(\mathbf{x})} + \sum\nolimits_{j=k+1}^G P_{\pi(j)}(\gamma) \Big),
\end{equation}
where the summation term is zero for the strongest group $k=G$. 
The feasibility of $\gamma$ is checked by verifying if the total power constraint~\eqref{st:power} is satisfied.
Since the total required power is strictly increasing with $\gamma$, the optimal multicast rate ${\sf R}^{{\sf NOMA}, \star}(\mathbf{x})$ can be efficiently found via a bisection search over $[\gamma_{\sf min}, \gamma_{\sf max}]$. We initialize the bisection interval as $\gamma_{\sf min}=0$ and $\gamma_{\sf max}=P_{\rm t}\min_{g\in\mathcal{G}} A_{\pi(g)}(\mathbf{x})$, which upper-bounds the optimal equalized NOMA SINR since any group's SINR cannot exceed the SNR obtained when allocating the entire power budget to that group.

\subsubsection{HOE-Enhanced Element-Wise Sequential PA Location Optimization}\label{subsubsec:NOMA_multi_PA}
Given a PA placement vector $\mathbf{x}$, the optimal MMF multicast rate under NOMA can be obtained by the inner procedure, which jointly determines the decoding order $\bm{\pi}(\mathbf{x})$ and the associated power allocation via bisection. The outer-layer design objective is to maximize this optimal rate by adjusting the PA locations, i.e.,
\begin{equation}\label{eq:noma_outer_x}
\max_{\mathbf{x}\in\mathcal{X}} \ {\sf R}^{{\sf NOMA}, \star}(\mathbf{x}).
\end{equation}
Directly evaluating ${\sf R}^{{\sf NOMA}, \star}(\mathbf{x})$ for every candidate position in $\mathcal{X}$ is computationally expensive, since each evaluation involves sorting-dependent SIC ordering and a nested bisection-based power search.
To obtain a low-complexity yet effective PA update rule, we adopt an element-wise sequential optimization strategy and enhance it with a hierarchical objective evaluation (HOE) mechanism, which evaluates the potential of candidate positions in two distinct stages.
Specifically, for the optimization of the $n$th PA with a candidate position $x$, let $R^{\star}_{\sf{best}}$ denote the maximum multicast rate achieved by the optimal solution found so far. The HOE-enhanced element-wise sequential mechanism proceeds as follows:

\begin{itemize}
	\item {Stage I: Upper-Bound Assessment:} 
We first derive a theoretical upper bound for the current candidate, denoted as $R^{\star}_{\sf{upper}}(x)$. This bound corresponds to the ideal rate of the bottleneck group assuming exclusive full power allocation, i.e., 
\begin{equation}\label{eq:upper_bound}
R^{\star}_{\sf{upper}}(x) = \log_2 \Big( 1 + P_{\rm t}\min_{g} A_g(\mathbf{x}|_{x_n=x}) \Big).
\end{equation}
The condition $R^{\star}_{\sf{upper}}(x) \le R^{\star}_{\sf{best}}$ indicates that the candidate position $x$ is inherently suboptimal. In this case, the evaluation terminates at Stage I, and the exact computation is circumvented.
\item {Stage II: Exact Objective Evaluation:} 
Only when $R^{\star}_{\sf{upper}}(x) > R^{\star}_{\sf{best}}$, the evaluation advances to Stage II. The bisection search described in the previous subsection is executed to compute the exact value of ${R}^{\star}(\mathbf{x})$.
\end{itemize}

Mathematically, the update principle for the $n$th PA location under the HOE-enhanced element-wise sequential PA location optimization is expressed as follows:
\begin{align}\label{eq:x_opt_noma}
	x_n^\star=\arg\max_{x\in\mathcal{X}_n}\ \tilde{R}(x),
\end{align}
where
\begin{align}
	\tilde{R}(x)\triangleq
\begin{cases}
R^\star(\mathbf{x}|_{x_n=x}), & R^\star_{\sf upper}(x)>R^\star_{\sf best},\\
-\infty, & \text{otherwise}.
\end{cases}
\end{align}
By prioritizing the low-complexity bound assessment, the computationally intensive Stage II is triggered exclusively for high-potential candidates, which significantly accelerates the convergence of the algorithm. The detailed implementation of this procedure is summarized in Algorithm~\ref{alg:hoe_noma}.

Similar to Algorithm~\ref{alg:bcd_algorithm}, the convergence is guaranteed by the monotonicity of the sequential updates. Regarding complexity, let $\xi \in [0,1]$ denote the candidate retention ratio (i.e., the fraction of candidates selected for Stage II). The total complexity is $\mathcal{O}\big( I_{\rm iter} N L (K + \xi I_{\rm bi} G) \big)$, where $I_{\rm bi}$ is the bisection iterations. Since typically $\xi \ll 1$, the computationally expensive bisection term $\mathcal{O}(I_{\rm bi} G)$ is effectively suppressed for majority of candidates.

\begin{algorithm}[t]
\caption{Joint Power Allocation and PA Location Optimization for NOMA Scheme}
\label{alg:hoe_noma}
\begin{algorithmic}[1]
\REQUIRE Power budget $P_{\rm t}$, tolerance $\epsilon$, discrete grid $\mathcal{S}_x$, and an initial feasible $\mathbf{x}^{(0)}\in\mathcal{X}$.
\STATE Initialize $t\leftarrow 0$. Calculate initial rate $R^{(0)}$.
\REPEAT
\STATE $t\leftarrow t+1$.
\FOR{$n=1$ to $N$}
\STATE For each candidate position $x \in \mathcal{X}_n$: i) Form the temporary PA vector $\tilde{\mathbf{x}}$ by replacing the $n$th element with $x$. ii) Solve the optimal power allocation for $\tilde{\mathbf{x}}$ via bisection search to obtain $R(x)$.
\STATE Update $x_n$ and $\mathbf{p}^{(t)}$ by solving \eqref{eq:x_opt_noma}.
\ENDFOR
\STATE Update current objective value $R^{(t)} = {R}^{\star}(\mathbf{x}^{(t)})$.
\UNTIL{$|R^{(t)}-R^{(t-1)}|\le \epsilon$}
\ENSURE Optimal PA location $\mathbf{x}^\star$ and power allocation $\mathbf{p}^\star$.
\end{algorithmic}
\end{algorithm}

\subsubsection{Single-PA Case}
To gain fundamental insights into the impact of PA placement on the NOMA performance, we consider the special case with a single PA, i.e., $N=1$. We first establish the fundamental relationship between the PA location and the transmit power consumption as follows.

\begin{lemma}\label{prop:closed_form_power}
Consider a PASS-enabled NOMA multigroup multicast system equipped with a single PA located at $x$. To support $G$ groups with a target SINR $\gamma$, the minimum transmit power is given by the closed-form expression as follows:
\begin{equation} \label{eq:power_closed_form}
    P_{\sf{req}}(\gamma, x) = \sum\nolimits_{g=1}^{G} \frac{\gamma (1+\gamma)^{g-1}}{A_{\pi(g)}(x)}.
\end{equation}
\end{lemma}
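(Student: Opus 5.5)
We prove the formula by unrolling the recursion \eqref{eq:power_recursive} from the strongest group downward. With $N=1$ every CNR $A_{\pi(g)}(x)$ is a scalar function of $x$, and the SINR constraints of all groups must hold with equality at the target $\gamma$. Equality is required because in the MMF-minimum-power problem any group whose SINR exceeds $\gamma$ could reduce its power without hurting the others. The reason is that group $\pi(g)$ only sees interference from stronger groups $\pi(j)$, $j>g$, so lowering $P_{\pi(g)}$ only relaxes the constraints of weaker groups. Moreover, since the SINR in \eqref{eq:SINR_NOMA} is monotone in the CNR, the bottleneck user of each group is the one attaining $A_{\pi(g)}(x)$. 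Hence the minimum-power solution is exactly the recursion \eqref{eq:power_recursive}, which we take as given.

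The key step is to introduce the tail sums $S_k \triangleq \sum_{j=k}^{G} P_{\pi(j)}(\gamma)$, so that the total power is $P_{\sf req}(\gamma,x)=S_1$. The recursion then reads $S_k - S_{k+1} = \gamma\big(A_{\pi(k)}^{-1}(x) + S_{k+1}\big)$, with $S_{G+1}=0$. Equivalently,
\begin{equation*}
S_k=(1+\gamma)S_{k+1}+\gamma A_{\pi(k)}^{-1}(x),
\end{equation*}
which is a first-order linear recursion with constant coefficient $(1+\gamma)$. Unrolling it from $k=G$ down to $k=1$ (or proving it by backward induction on $k$) gives
\begin{equation*}
S_k=\sum\nolimits_{g=k}^{G}\gamma(1+\gamma)^{g-k}A_{\pi(g)}^{-1}(x).
\end{equation*}
Setting $k=1$ yields \eqref{eq:power_closed_form}.

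The only delicate point is justifying that the recursion indeed gives the \emph{minimum} required power, rather than merely a feasible allocation. For this we would argue by backward induction that any feasible $\mathbf{p}$ meeting SINR $\ge\gamma$ for all groups satisfies $S_k(\mathbf{p})\ge S_k$ componentwise in the tail sums. The argument is as follows:
\begin{itemize}
\item the constraint of group $\pi(k)$ gives $P_{\pi(k)}\ge \gamma\big(A_{\pi(k)}^{-1}+S_{k+1}(\mathbf{p})\big)$;
\item adding $S_{k+1}(\mathbf{p})$ to both sides gives $S_k(\mathbf{p})\ge(1+\gamma)S_{k+1}(\mathbf{p})+\gamma A_{\pi(k)}^{-1}$;
\item by the induction hypothesis $S_{k+1}(\mathbf{p})\ge S_{k+1}$, so the right-hand side is at least $(1+\gamma)S_{k+1}+\gamma A_{\pi(k)}^{-1}=S_k$.
\end{itemize}
Taking $k=1$ gives $S_1(\mathbf{p})\ge S_1$, so no feasible allocation uses less power than the recursion. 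Finally, SIC feasibility is automatic under the ascending order \eqref{eq:sorting}, as noted after \eqref{eq:SINR_cross}, so no additional constraints arise.
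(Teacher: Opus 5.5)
Your proof is correct and follows essentially the same route as the paper. Both define the tail sums $S_k$, turn the tight SINR constraints into the linear recurrence $S_k=(1+\gamma)S_{k+1}+\gamma A_{\pi(k)}^{-1}(x)$ with $S_{G+1}=0$, and unroll it to obtain $S_1$. Your backward-induction bound $S_k(\mathbf{p})\ge S_k$ for every feasible $\mathbf{p}$ is a useful addition. It rigorously justifies the paper's unproved claim that all SINR constraints must be active at the minimum-power solution.
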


\begin{IEEEproof}
Please refer to Appendix~\ref{app:power_derivation}.
\end{IEEEproof}

Since the optimization variable $x$ involves only a single scalar bounded within $[0, D_x]$, the globally optimal solution maximizing the multicast rate can be efficiently obtained via a one-dimensional grid search over the discretized aperture $\mathcal{S}_x$.

Based on \eqref{eq:power_closed_form}, we derive two asymptotic design criteria for the optimal PA placement $x^\star$:
\paragraph{Low-SNR Regime ($\gamma \ll 1$)}
Applying the first-order approximation $(1+\gamma)^{g-1} \approx 1$, the power expression simplifies to linear summation $P_{\sf{req}} \approx \gamma f_A(x)$. Consequently, the optimal SINR is approximated by
\begin{equation}
    \gamma^{{\sf NOMA},\star}_{\sf{low}}(x) \approx {P_{\rm t}}f^{-1}_A(x).
\end{equation}
This behavior aligns with the TIN scheme in~\eqref{prob:x_obj}, which focuses on balancing the signal strength of all groups evenly.

\paragraph{High-SNR Regime ($\gamma \gg 1$)}
In this regime, the interference cost dominates, and the term $\gamma(1+\gamma)^{g-1}$ scales as $\gamma^g$. The power requirement is dictated by the most restrictive constraint among all decoding layers. Specifically, for the solution to be feasible, the power consumption of each layer must be within the budget $P_t$. Thus, the achievable SINR is bounded by:
\begin{equation}\label{eq:gamma_noma_high}
    \gamma^{{\sf NOMA},\star}_{\sf{high}}(x) \approx \min\nolimits_{ g \in{\mathcal G}} \left( P_t A_{\pi(g)}(x) \right)^{\frac{1}{g}}.
\end{equation}
This asymptotic bound reveals that NOMA is highly sensitive to \textit{channel disparity} in the high SNR regime. Unlike the low-SNR case, the PA placement here should be designed to ensure that users with higher decoding orders $g$ possess sufficiently strong channel gains $A_{\pi(g)}(x)$ to compensate for the power penalty scaling as $\gamma^g$.

\vspace{-10pt}
\section{TDMA-Based Transmission}\label{sec:TDMA}
To completely eliminate inter-group interference, we employ the TDMA scheme where each group is allocated with a dedicated time slot.
In what follows, we present two PA location optimization protocols, namely PS and PM, which differ in whether the PA locations can be reconfigured across time slots.
\vspace{-10pt}
\subsection{PS-based TDMA Scheme}\label{subsubsec:TDMA_switching}
The TDMA-PS protocol allows slot-wise PA reconfiguration, which assigns a dedicated PA placement $\mathbf{x}_g$ to each group $g$. Consequently, the multicast rate for the $g$th group is
\begin{equation}\label{eq:group_rate_TDMA_switching}
R_{g}^{\sf PS}(\mathbf{x}_g,\!P_g,\!\tau_g)\!=\!\tau_g\! \min_{k\in\mathcal{K}_g}\!\log_2\!\left(1\!+\!{P_g{\sigma_{g,k}^{-2}}\big|h_{g,k}(\mathbf{x}_g)\big|^2}\right).
\end{equation}
The corresponding MMF problem is formulated as follows:
\begin{subequations}\label{prob:MMF_TDMA_switching}
\begin{align}
\max_{\{\mathbf{x}_g\in\mathcal{X}\},\,\mathbf{p},\,\boldsymbol{\tau}}\ &
\min_{g\in\mathcal{G}} \ R_{g}^{\sf PS}(\mathbf{x}_g,P_g,\tau_g), \\
\text{s.t.}\ \
& \sum\nolimits_{g=1}^{G}\tau_g P_g \le P_{\rm t},\ P_g\ge 0,\ \eqref{st:time}. \label{prob:MMF_TDMA_switching_tau}
\end{align}
\end{subequations}
The joint optimization in \eqref{prob:MMF_TDMA_switching} can be efficiently decoupled by exploiting the orthogonality of the PS protocol. Since each group $g$ is served in a dedicated time slot with a group-specific PA placement $\mathbf{x}_g$, the optimization of $\mathbf{x}_g$ depends solely on the local channel conditions and is independent of the resource allocation $\{\tau_g, P_g\}$. This allows the problem to be solved optimally in two sequential stages.
\subsubsection{PA Location Optimization}
Exploiting the decoupling property, the optimal PA placement $\mathbf{x}_g^{\star}$ for each group is obtained by independently maximizing its effective channel gain $A_g(\mathbf{x})$. This effectively decomposes the problem into $G$ single-group multicast communications. Consequently, each $\mathbf{x}_g^{\star}$ can be efficiently determined by employing the element-wise sequential optimization strategy detailed in Section~\ref{subsec:algorithm}.

\subsubsection{Optimal Resource Allocation}
With the optimized PA placement $\mathbf{x}_g^{\star}$ fixed, the original problem~\eqref{prob:MMF_TDMA_switching} reduces to optimizing the time allocation $\boldsymbol{\tau}$ and power allocation $\mathbf{p}$. The objective function involves the coupled term $\tau_g \log_2(1 + P_g A_g(\mathbf{x}_g^{\star}))$, which is generally non-convex with respect to (w.r.t.) $\{\tau_g, P_g\}$.

To tackle this, we introduce the energy variable $E_g \triangleq \tau_g P_g$, which represents the energy consumed by the $g$th group within the TDMA slot. By substituting $P_g = E_g/\tau_g$, the achievable rate of group $g$ is transformed into the following form:
\begin{equation}\label{eq:rate_perspective}
R_g(E_g, \tau_g) = \tau_g \log_2\left( 1 + {\tau_g^{-1}}{E_g A_g(\mathbf{x}_g^{\star})} \right).
\end{equation}
Mathematically, \eqref{eq:rate_perspective} is the perspective function of the concave function $\log_2(1 + x)$, which is jointly concave w.r.t. $E_g$ and $\tau_g$. Consequently, the resource allocation problem can be reformulated as a standard convex optimization problem:
\begin{subequations}\label{prob:convex_resource}
\begin{align}
\max_{t, \,\boldsymbol{\tau}, \,\mathbf{E}}\ \ & t \\[-5pt]
\text{s.t.}\ \ & \tau_g \log_2\left( 1 + {\tau^{-1}_g}{E_g A_g(\mathbf{x}_g^{\star})} \right) \ge t, \quad \forall g \in \mathcal{G}, \label{eq:rate_convex}\\[-3pt]
& \sum\nolimits_{g=1}^{G} E_g \le P_{\rm t}, \ \eqref{prob:MMF_TDMA_switching_tau}. \label{eq:energy_convex}
\end{align}
\end{subequations}
Since problem \eqref{prob:convex_resource} is convex, the globally optimal time allocation $\boldsymbol{\tau}^{\star}$ and energy allocation $\mathbf{E}^{\star}$ can be efficiently obtained using standard convex solvers such as CVX. Finally, the optimal power allocation is recovered by $P_g^{\star} = E_g^{\star}/\tau_g^{\star}$. Regarding complexity, the decoupled scheme entails $\mathcal{O}(N L K)$ for the PA optimization and approximately $\mathcal{O}(G^{3.5})$ for the convex resource allocation via interior-point methods. 

Despite yielding the performance upper bound, the PA switching strategy requires rapid mechanical reconfiguration between time slots, which poses a severe challenge for current PASS hardware. This practicality bottleneck motivates the investigation of the low-complexity PM protocol in the sequel.
\subsection{PM-based TDMA Scheme}\label{subsubsec:TDMA_multiplexing}
The TDMA-PM protocol assigns a common PA placement vector $\mathbf{x}$ to all groups through different slots. Consequently, the multicast rate for the $g$th group reduces to the following:
\begin{equation}\label{eq:group_rate_TDMA_multiplexing}
R_{g}^{\sf PM}(\mathbf{x},\!P_g,\!\tau_g)\!=\!\tau_g\! \min_{k\in\mathcal{K}_g}\!\log_2\!\left(1\!+\!{P_g{\sigma_{g,k}^{-2}}\big|h_{g,k}(\mathbf{x})\big|^2}\right).
\end{equation}
The corresponding MMF problem is formulated as follows:
\begin{align}\label{prob:MMF_TDMA_multiplexing}
\mathcal{P}^{\sf PM}: & \max_{\mathbf{x}\in\mathcal{X},\,\mathbf{p},\,\boldsymbol{\tau}}\ \
\min_{g\in\mathcal{G}} \ R_{g}^{\sf PM}(\mathbf{x},P_g,\tau_g), \
\text{s.t.}\ \eqref{prob:MMF_TDMA_switching_tau}. 
\end{align}
In this protocol, the shared PA locations $\mathbf{x}$ are intricately coupled with the resource allocation variables $\mathbf{p}$ and $\boldsymbol{\tau}$. Unlike the PS protocol where the optimization of $\{{\bf x}_g\}$ and $\{\tau_g, P_g\}$ can be decoupled, any adjustment to a single PA in the PM protocol simultaneously impacts the effective channel gains $\{A_g(\mathbf{x})\}$ of all groups, which renders the joint optimization problem non-convex. Following a similar methodology to Section~\ref{subsubsec:NOMA_multi_PA}, we propose an element-wise sequential optimization for PA locations enhanced by the HOE strategy, while solving the joint power and time allocation within each iteration process.

\subsubsection{Optimal Resource Allocation}
For a fixed PA placement $\mathbf{x}$, the subproblem that maximizes the multicast rate $t$ by jointly optimizing $\mathbf{p}$ and $\boldsymbol{\tau}$ is formulated as follows:
\begin{subequations}\label{prob:inner_convex}
\begin{align}
\max_{t, \,\boldsymbol{\tau}, \,\mathbf{E}}\ \ & t \\[-5pt]
\text{s.t.}\ \ & \tau_g \log_2\left( 1 + {\tau^{-1}_g}{E_g A_g(\mathbf{x})} \right) \ge t, \quad \forall g \in \mathcal{G}, \label{eq:rate_constraint_convex}\\
& \eqref{prob:MMF_TDMA_switching_tau}, \ \eqref{eq:energy_convex}. \notag 
\end{align}
\end{subequations}
Notably, for a target rate $t$, determining its feasibility is equivalent to checking whether the minimum total energy required to support $t$ exceeds the budget $P_{\rm t}$. From \eqref{eq:rate_constraint_convex}, the minimum energy required for group $g$ is derived in a closed form as follows:
\begin{align}\label{eq:min_energy_per_group}
    E_g^{\sf min}(t, \tau_g) = {\tau_g}{A^{-1}_g(\mathbf{x})} ( 2^{t/\tau_g} - 1 ).
\end{align}Consequently, the feasibility check reduces to the following total energy minimization problem w.r.t. $\boldsymbol{\tau}$:
\begin{align}\label{prob:min_total_energy}
    \mathcal{E}(t) = \min_{\boldsymbol{\tau} \succ 0} \sum_{g=1}^{G} \frac{\tau_g}{A_g(\mathbf{x})} \left( 2^{t/\tau_g} - 1 \right), \ \text{s.t.} \ \sum_{g=1}^{G} \tau_g = 1.
\end{align}
Problem \eqref{prob:min_total_energy} is strictly convex. We solve it by introducing the Lagrange multiplier $\nu \ge 0$ for the equality constraint $\sum_g \tau_g = 1$. The Lagrangian function is given by
\begin{align}\label{eq:lagrangian_tdma}
    \mathcal{L}(\boldsymbol{\tau}, \nu) = \sum\nolimits_{g=1}^{G} E_g^{\sf min}(t, \tau_g) + \nu \Big( \sum\nolimits_{g=1}^{G} \tau_g - 1 \Big).
\end{align}
Applying the Karush-Kuhn-Tucker (KKT) conditions, the optimal time allocation $\tau_g^{\star}$ must satisfy the stationarity condition $\frac{\partial \mathcal{L}}{\partial \tau_g} = 0$. By computing the partial derivative of \eqref{eq:min_energy_per_group}, we obtain the optimality equation for each group as follows:
\begin{align}\label{eq:KKT_stationarity}
    \frac{\partial E_g^{\sf min}}{\partial \tau_g} + \nu = {A^{-1}_g(\mathbf{x})} \left( 2^{t/\tau_g} \left( 1 - \frac{t \ln 2}{\tau_g} \right) - 1 \right) + \nu = 0.
\end{align}
Equation \eqref{eq:KKT_stationarity} establishes a monotonic relationship between $\tau_g$ and $\nu$ for a fixed $t$, which implies that $\tau_g(\nu)$ is uniquely determined by $\nu$. Since the total time consumption $\sum_g \tau_g(\nu)$ is also monotonic w.r.t. $\nu$, the optimal dual variable $\nu^\star$ satisfying $\sum \tau_g(\nu^\star)=1$ can be efficiently found via a bisection search.
Finally, the maximum achievable rate $t^\star$ is obtained by an outer-loop bisection search, which iteratively adjusts $t$ and checks the energy feasibility condition $\mathcal{E}(t) \le P_{\rm t}$.

\subsubsection{HOE-Enhanced Element-Wise Sequential PA Location Optimization}\label{pare:hoe_pm}
We adopt the same HOE-enhanced sequential optimization framework established in Section~\ref{subsubsec:NOMA_multi_PA} to optimize $\mathbf{x}$. The update rule follows the structure of \eqref{eq:x_opt_noma}, with the exact objective evaluation of Stage II replaced by the KKT-based inner solver derived above.
Moreover, for the Stage I, we derive the theoretical limit by assuming the bottleneck group exclusively occupies the entire time resource and power budget, i.e., $\tau_g=1$, $P_g=P_{\rm t}$, which leads to a mathematical expression identical to \eqref{eq:upper_bound}.

Similar to Algorithm~\ref{alg:bcd_algorithm}, the convergence is guaranteed by the monotonicity of the sequential updates. 
Specifically, for any candidate placement $\tilde{\mathbf{x}}$, the inner resource-allocation solver returns the globally optimal value by solving \eqref{prob:inner_convex}. 
Therefore, each element-wise update that selects the best $x\in\mathcal{X}_n$ yields a non-decreasing outer objective sequence, i.e., $R^\star(\mathbf{x}^{(t+1)}) \ge R^\star(\mathbf{x}^{(t)})$. 
Since $R^\star(\mathbf{x})$ is upper-bounded under the finite resource budgets, the overall algorithm converges.
Regarding complexity, the total complexity is $\mathcal{O}\!\left(I_{\rm iter} N L \big(K + \xi I_{\rm out} I_{\nu} G\big)\right)$,
where $I_{\rm out}$ is the number of outer bisection iterations over the target rate $t$, and $I_{\nu}$ is the number of inner bisection iterations for solving \eqref{eq:KKT_stationarity}.

\subsubsection{Single-PA Case}
To gain more design insights and facilitate a low-complexity implementation, we focus on the single-PA scenario with equal time allocation, i.e., $\tau_g = 1/G, \forall g$. In this case, the effective bottleneck CNR $A_g(\mathbf{x})$ defined in \eqref{eq:eff_channel_noma} becomes a scalar function of $x$.
The original problem \eqref{prob:MMF_TDMA_multiplexing} can thus be simplified to a joint optimization problem of power allocation and PA location. The following lemma provides the optimal power allocation structure in closed form.

\begin{lemma}\label{prop:TDMA_closed_form}
For the considered TDMA-PM scheme with $N=1$ and equal time allocation, given any fixed PA location $x$, the optimal power allocation $P_g^{\star}(x)$ that maximizes the multicast rate is given by
\begin{equation}\label{eq:TDMA_optimal_power}
P_g^{\star}(x) = \frac{ G P_{\rm t} }{{A_g(x)}f_A({x}) } .
\end{equation}
Substituting \eqref{eq:TDMA_optimal_power} back into the objective function yields the maximized multicast rate as follows:
\begin{equation}\label{eq:TDMA_optimal_rate}
R^{\sf PM, \star}(x) = {G}^{-1} \log_2 ( 1 + {G P_{\rm t}}{f^{-1}_A({x})} ).
\end{equation}Consequently, the optimal PA location $x^{\star}$ is obtained by solving the equivalent minimization problem as follows:
\begin{equation}\label{eq:TDMA_x_opt_problem}
x^{\star} = \arg \min_{x \in \mathcal{X}} \sum\nolimits_{g=1}^{G} \left( \max_{k\in\mathcal{K}_g} {\sigma_{g,k}^{2}}{\big|h_{g,k}(x)\big|^{-2}} \right).
\end{equation}
\end{lemma}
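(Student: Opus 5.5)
With $N=1$ and $\tau_g=1/G$ for all $g$, I would first rewrite the constraints of \eqref{prob:MMF_TDMA_multiplexing}. The time constraint \eqref{st:time} is met with equality. The energy constraint $\sum_g \tau_g P_g\le P_{\rm t}$ becomes the aggregate power constraint $\sum_{g=1}^{G}P_g\le GP_{\rm t}$, with $P_g\ge0$.

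For a fixed location $x$, the bottleneck-user argument already used for TIN and NOMA applies, because $\log_2(1+P_g\sigma_{g,k}^{-2}|h_{g,k}(x)|^2)$ is increasing in the CNR. It reduces the group rate to $R_g^{\sf PM}=G^{-1}\log_2(1+P_gA_g(x))$, where $A_g(x)$ is the bottleneck CNR defined in \eqref{eq:eff_channel_noma}. The inner problem is therefore an MMF power allocation among $G$ virtual users with decoupled, strictly increasing rate functions and a single sum-power budget $GP_{\rm t}$.

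The key step is to show that the optimum equalizes the SNRs, $P_gA_g(x)=\gamma$ for all $g$, and uses the full budget. I would argue by contradiction with a power-transfer step. If some group had a strictly larger SNR than the minimum, a small amount of its power could be moved to the bottleneck groups. The objective would not decrease, and it would strictly increase once all minimizing groups receive a share. Likewise, any slack in the budget could be used to scale all $P_g$ up, which contradicts optimality.

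Hence $P_g=\gamma/A_g(x)$, and the tight budget gives $\gamma\sum_g A_g^{-1}(x)=\gamma f_A(x)=GP_{\rm t}$. This yields $\gamma^\star=GP_{\rm t}/f_A(x)$ and $P_g^\star(x)=GP_{\rm t}/(A_g(x)f_A(x))$, which is \eqref{eq:TDMA_optimal_power}. Substituting into the rate expression gives \eqref{eq:TDMA_optimal_rate}.

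Finally, since $R^{\sf PM,\star}(x)$ is strictly decreasing in $f_A(x)$, maximizing over $x$ is equivalent to minimizing $f_A(x)=\sum_g A_g^{-1}(x)$. Writing $A_g^{-1}(x)=\max_{k\in\mathcal K_g}\sigma_{g,k}^2|h_{g,k}(x)|^{-2}$ gives \eqref{eq:TDMA_x_opt_problem}.

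The only step that needs care is the equalization argument, in particular the case where several groups share the minimum SNR. I would handle it cleanly with the epigraph form: maximize $\gamma$ subject to $P_g\ge\gamma/A_g$ and $\sum_gP_g\le GP_{\rm t}$. In this form the optimum is immediate, because all constraints must bind at the largest feasible $\gamma$.
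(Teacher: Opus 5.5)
Your proof is correct and follows essentially the same route as the paper: reduce each group to its bottleneck CNR $A_g(x)$, pass to the epigraph form, show that every rate constraint and the budget $\sum_g P_g\le GP_{\rm t}$ are active, and conclude $P_g\propto A_g^{-1}(x)$ with the constant fixed by $f_A(x)$. The only difference is how you justify activeness: you get it directly (summing $P_g\ge\gamma/A_g(x)$ gives $\gamma f_A(x)\le\sum_g P_g\le GP_{\rm t}$, and equality forces each $P_g=\gamma/A_g(x)$), whereas the paper obtains $\mu_g=\lambda$ from KKT stationarity and then asserts that all rate constraints bind, so your elementary argument is, if anything, the more complete one.
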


\begin{IEEEproof}
Please refer to Appendix~\ref{app:proof_prop_TDMA}.
\end{IEEEproof}
\subsection{Further Discussion}
Despite distinct MA schemes, TIN, TDMA-PM, and NOMA when operating in the low-SNR regime share a common PA optimization principle: the minimization of the sum of inverse effective group CNR $f_A({x})$. This finding establishes a fundamental guideline that PASS-enabled multigroup multicast systems should prioritize maximizing the {\textit{harmonic mean}} of group effective gains to ensure fairness.
However, the effectiveness of this unified geometric metric diverges fundamentally due to the intrinsic limitations of each scheme:
\begin{itemize}
    \item {TIN Scheme (Interference-Limited):} Although minimizing $f_A({x})$ balances the signal strengths, the performance is fundamentally bounded by the interference derived in \eqref{eq:tin_ceiling_rate}. In the high SNR regime, the gain from optimizing PA location becomes negligible, as the inter-group interference dominates.
    \item {TDMA Scheme (Resource-Limited):} This scheme circumvents interference saturation, which allows the geometric gain to translate directly into SNR improvement. However, the achievable rate is strictly penalized by the multiplexing factor $1/G$, which makes it less spectrally efficient than TIN or NOMA in the high-SNR regime.
    \item {NOMA Scheme (SNR-Adaptive Behavior):} The unification holds in the low-SNR regime because the system is noise-limited, which leads to a linear power-sharing behavior similar to TIN. Conversely, in the interference-limited high-SNR regime, the optimal PA placement must adhere to the alternative geometric criterion derived in~\eqref{eq:gamma_noma_high}. This design prioritizes strong channels to support high-order SIC decoding, which leverages channel disparity into a performance gain.
\end{itemize}

\begin{algorithm}[t]
\caption{Joint Resource Allocation and PA Location Optimization for TDMA-PM Scheme}
\label{alg:hoe_tdma_pm}
\begin{algorithmic}[1]
\REQUIRE Power budget $P_{\rm t}$, tolerance $\epsilon$, discrete grid $\mathcal{S}_x$, and an initial feasible $\mathbf{x}^{(0)}\in\mathcal{X}$.
\STATE Initialize $t\leftarrow 0$. Calculate initial rate $R^{(0)}$.
\REPEAT
\STATE $t\leftarrow t+1$.
\FOR{$n=1$ to $N$}
\STATE \quad i) Form the temporary PA vector $\tilde{\mathbf{x}}$ by replacing the $n$th element with $x$. ii) Solve the optimal time and power allocation for $\tilde{\mathbf{x}}$ via the KKT-based inner solver to obtain $R(x)$.
\STATE Update $x_n \leftarrow \arg \max_{x \in \mathcal{X}_n} R(x)$.
\STATE Update the optimal resource allocation $\mathbf{p}^{(t)}$ and $\boldsymbol{\tau}^{(t)}$.
\ENDFOR
\STATE Update current objective value $R^{(t)} = {R}^{\sf PM, \star}(\mathbf{x}^{(t)})$.
\UNTIL{$|R^{(t)}-R^{(t-1)}|\le \epsilon$}
\ENSURE Optimal PA location $\mathbf{x}^\star$, power allocation $\mathbf{p}^\star$, and time allocation $\boldsymbol{\tau}^\star$.
\end{algorithmic}
\end{algorithm}

\vspace{-10pt}
\section{Numerical Results}\label{sec:simulation}
To validate the effectiveness of the proposed optimization algorithms for joint PA placement and scheme-specific resource allocation, extensive simulations were conducted across the TIN, NOMA, and TDMA schemes.
\vspace{-10pt}
\subsection{Simulation Setup}
For the configuration of PASS, the waveguide is deployed at a height of $h = 5$ m with the side length $D_{\rm y} = 6$ m. The PAs are deployed along the waveguide with a minimum spacing of $\Delta_{\sf min} = 0.5\lambda$. The effective refractive index of the dielectric waveguide is assumed to be $n_{\mathrm{eff}} = 1.44$~\cite{10945421}.

For the multicast configuration, the carrier frequency is set to $f_c = 28$ GHz. Unless otherwise specified, the users are assumed to be uniformly distributed within the service region $D_{\rm x} \times D_{\rm y}$. The noise power at each user is set to $\sigma^2 = -90$ dBm. All numerical results are obtained by averaging over $1000$ independent random channel realizations.

For the algorithm configuration, the $N$ PAs are initialized with random positions along the waveguide, which ensures that the minimum inter-element spacing constraint $\Delta_{\sf min}$ is satisfied. To facilitate the one-dimensional search for the PA optimization, the waveguide is discretized into $L = 200$ equally spaced candidate points. The stopping tolerance for the proposed element-wise sequential optimization framework and the inner KKT-based solver is set to $\epsilon = 10^{-4}$. The maximum number of iterations for the outer loop is restricted to $20$ to ensure low computational latency.
\vspace{-10pt}
\subsection{Baseline Scheme}
We consider a conventional fixed-array system as a baseline, where the $N$ PAs are arranged as a uniform linear array (ULA) centered at the waveguide aperture, with a fixed inter-element spacing of $\lambda/2$. 
All antennas are connected to a single RF chain through a phase shifter to implement analog beamforming. Specifically, the continuous phase shift of each antenna is quantized into $L$ discrete candidate values uniformly distributed over $[0, 2\pi)$.
Based on this discretization, we employ an element-wise sequential phase optimization strategy. 
For any given analog beamforming, the optimal power and time resource allocations are determined using the algorithms as proposed in PASS.
\vspace{-10pt}
\subsection{Convergence Behavior of Proposed Algorithms}
\begin{figure}[!t]
\centering
\includegraphics[height=0.26\textwidth]{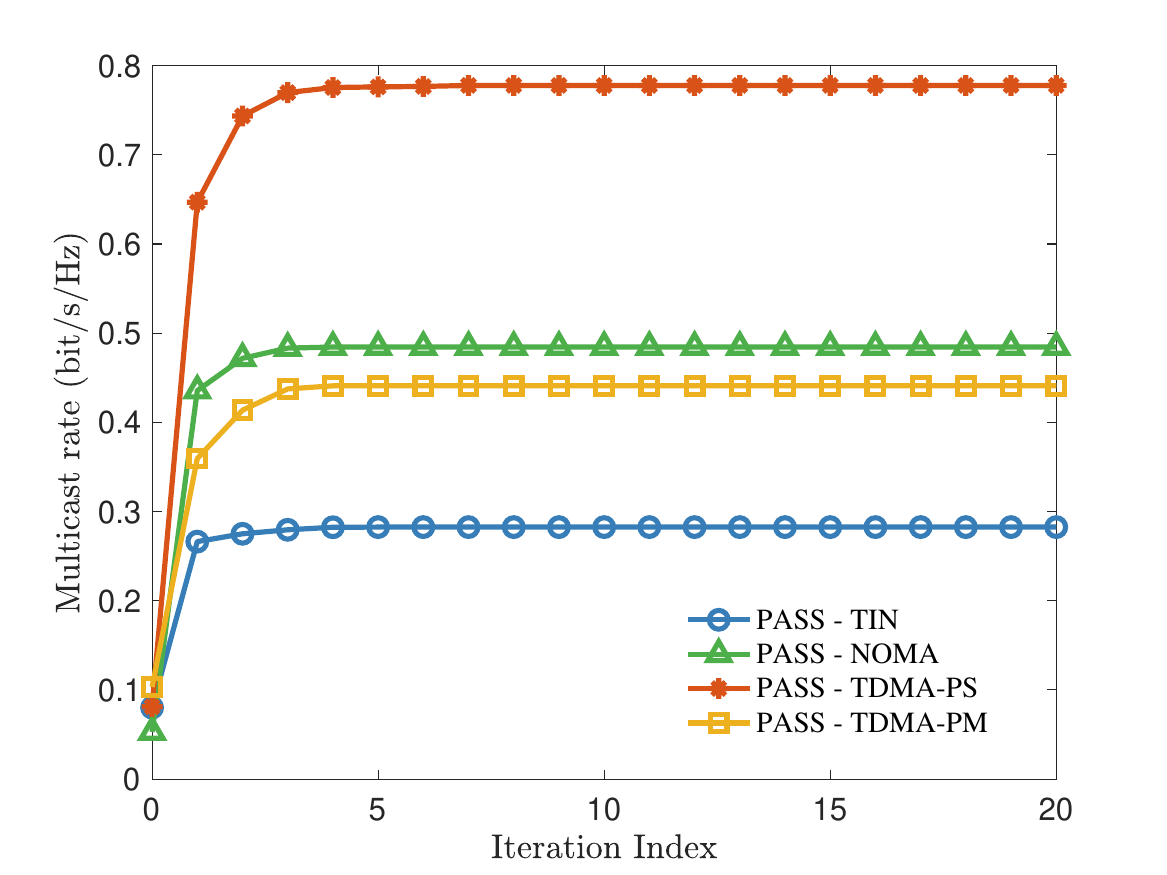}
\caption{Convergence behavior of the proposed optimization algorithms in terms of the multicast rate with $N=10$, $G=4$, $K=12$, $D_{\rm x}=20$ m, and $P_{\rm t}=-10$~dBm.}
\label{Fig:Convergence}
\vspace{-5pt}
\end{figure}
Fig. \ref{Fig:Convergence} illustrates the convergence behavior of the proposed algorithms in terms of the multicast rate. First, at the initial stage with random PA placement, all schemes exhibit comparable and limited multicast performance, which highlights the necessity of PA location optimization. Second, the multicast rates of all schemes stabilize within a small number of iterations, which validates the high efficiency of the adopted element-wise sequential optimization strategy. This implies that the proposed algorithms are well-suited for real-time wireless communication scenarios. Third, the strictly monotonic increasing trend verifies the validity of the AO framework, which ensures that each update step leads to a non-decreasing objective function value.
\vspace{-10pt}
\subsection{Multicast Rate versus the Transmit Power}
\begin{figure}[!t]
\centering
\includegraphics[height=0.26\textwidth]{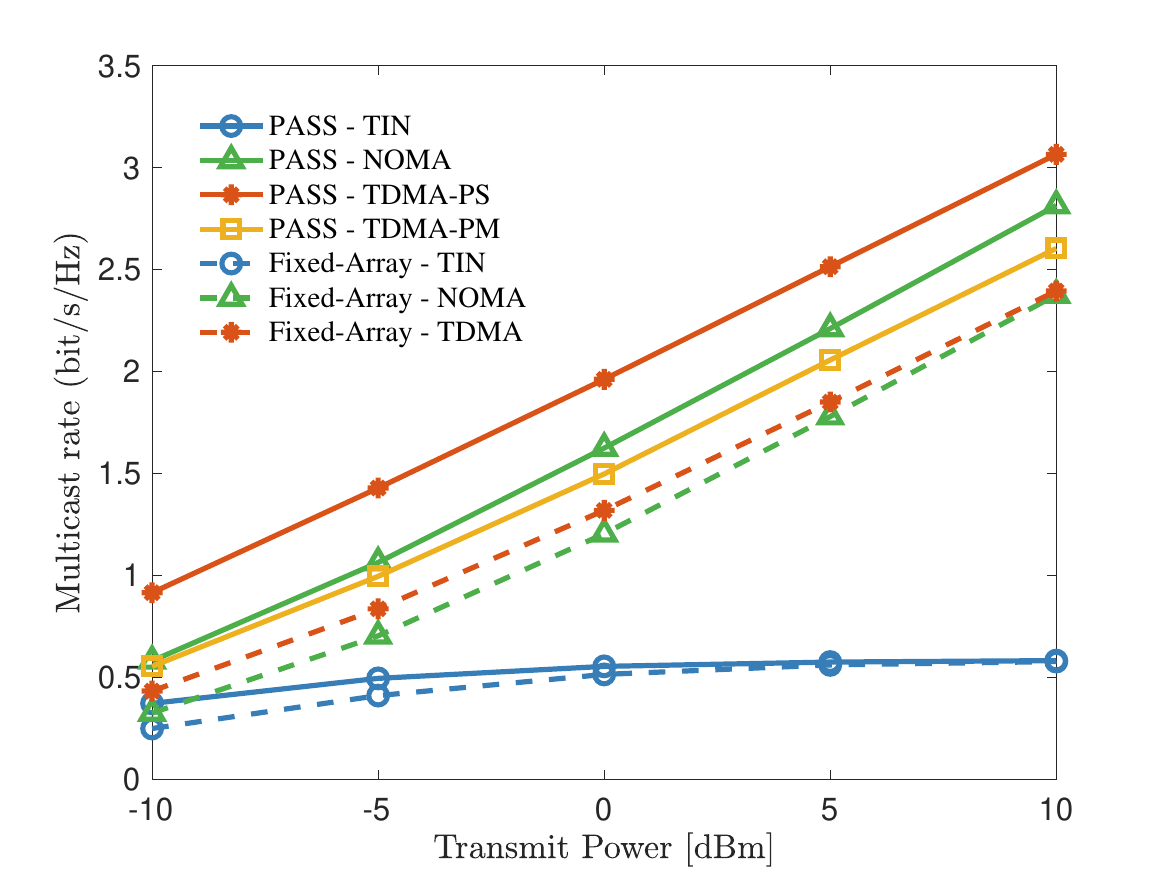}
\caption{Multicast rate versus the transmit power under the random group distribution with $N=10$, $G=4$, $K=12$, $D_{\rm x}=20$ m.}
\label{Fig:Power_0}
\vspace{-5pt}
\end{figure}

\begin{figure}[!t]
\centering
\includegraphics[height=0.26\textwidth]{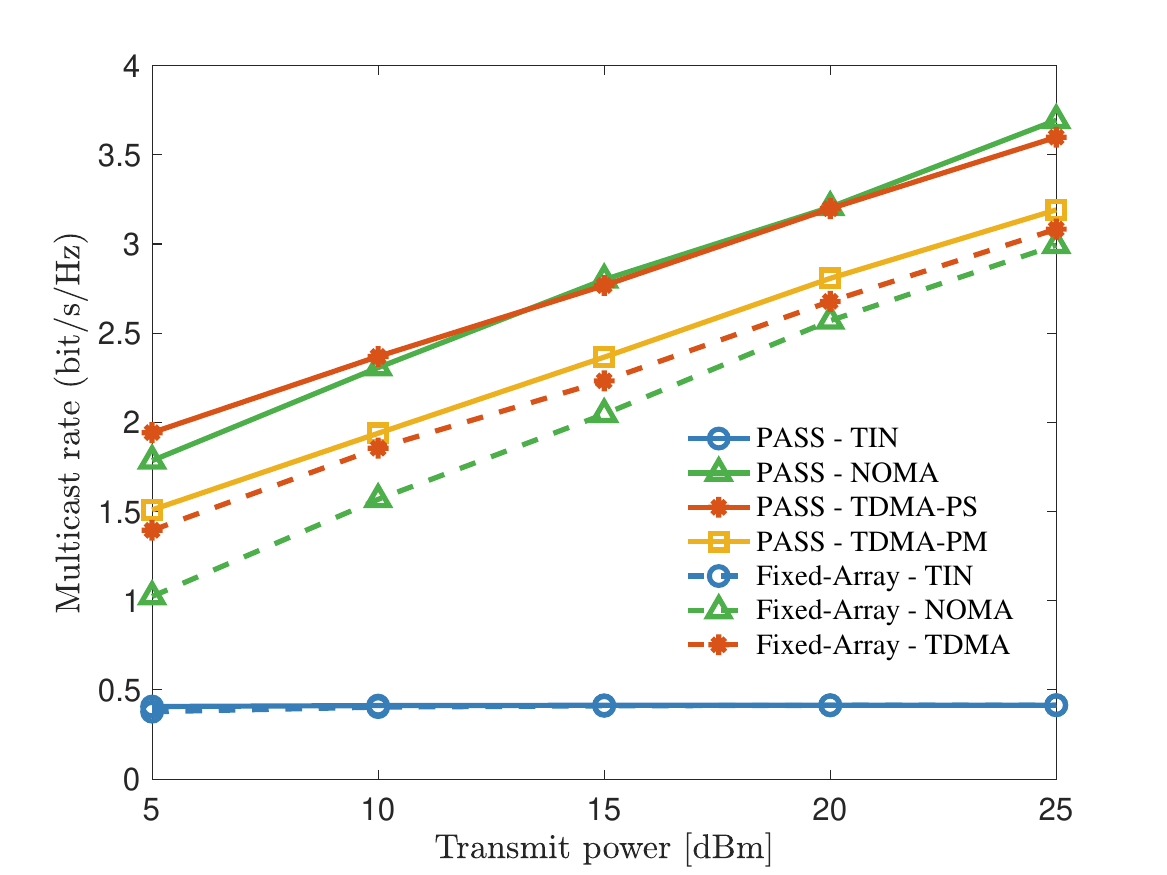}
\caption{Multicast rate versus the transmit power under the heterogeneous group distribution with $N=10$, $G=4$, $K=12$, $D_{\rm x}=20$ m.}
\label{Fig:Power}
\vspace{-5pt}
\end{figure}
Fig.~\eqref{Fig:Power_0} investigates the multicast rate versus the transmit power. In this scenario, users from all groups are randomly distributed within an overlapping service region. By optimizing the PA locations, the PASS architecture can effectively balance the channel gains, which implies that the bottleneck users in each group experience comparable channel quality. The results indicate that TDMA-PS achieves the superiority performance, which is attitude to its orthogonal access nature and the fully exploiting of flexible reconfigurable of PASS. Although TDMA-PM also eliminates inter-group interference, it falls behind NOMA due to its time resource limited characteristic. Furthermore, the performance of NOMA gradually approaches that of TDMA-PS as the transmit power increases, which demonstrates its effectiveness in high-SNR regimes even without PA switching.

To further evaluate the distinct advantages across different schemes, Fig.~\ref{Fig:Power} presents the system performance under a heterogeneous group distribution, where the users are clustered into spatially disjoint regions. In this setup, the channel conditions vary significantly across groups, and NOMA outperforms TDMA-PS in the high-SNR regime. This superiority is attributed to NOMA's ability to exploit channel disparity.

Finally, it is worth noting that for the TIN scheme, both PASS and the fixed-array system share a common performance upper bound, which is given in~\eqref{eq:tin_ceiling_rate}. This is because, any optimization of PA locations to enhance the useful signal strength inevitably leads to a concurrent increase in the inter-group interference. Consequently, the spatial advantage of PASS over conventional fixed-array systems naturally vanishes.
\vspace{-10pt}
\subsection{Multicast Rate versus the Side Length $D_{\rm x}$}
\begin{figure}[!t]
\centering
\includegraphics[height=0.26\textwidth]{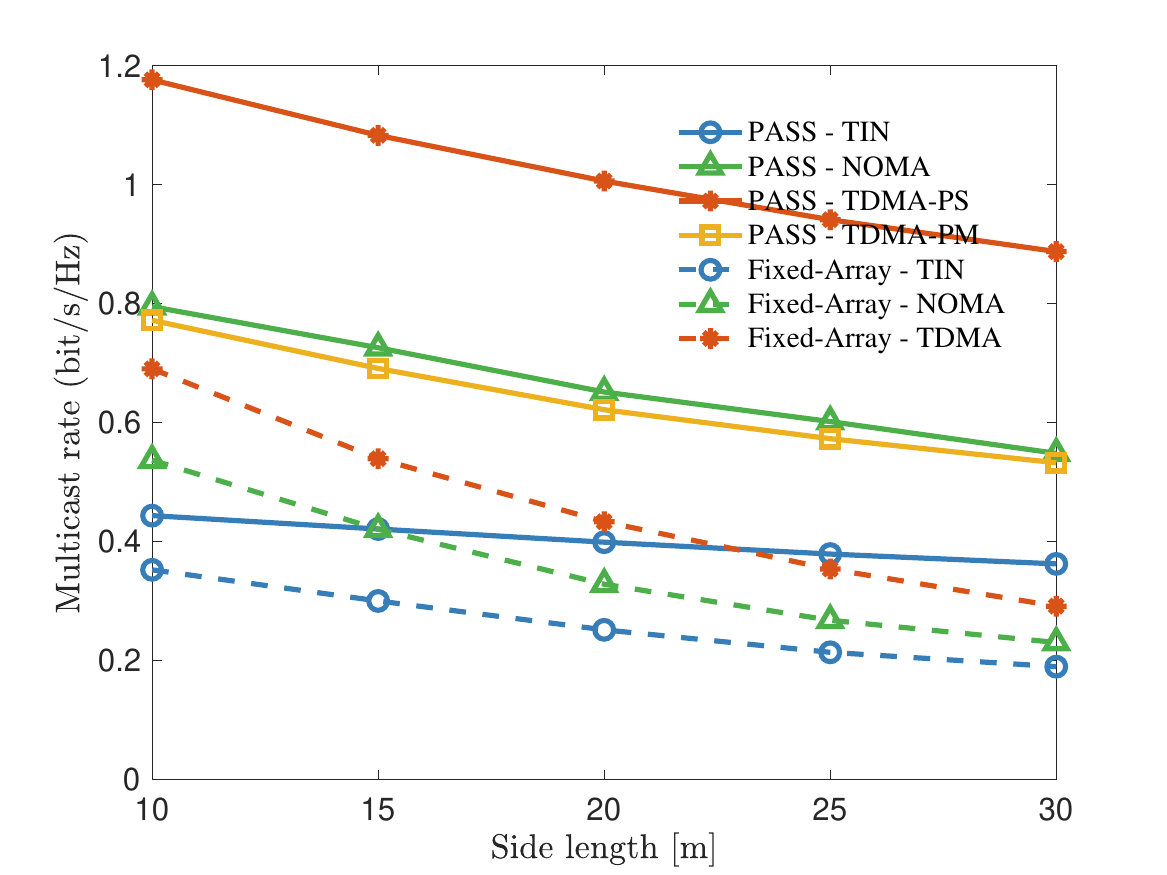}
\caption{Multicast rate versus the side length $D_{\rm x}$ with $N=10$, $G=3$, $K=12$, and $P_{\rm t}=-10$~dBm.}
\label{Fig:Dx}
\vspace{-5pt}
\end{figure}
Fig.~\ref{Fig:Dx} investigates the impact of the service region size $D_{\rm x}$ on the system performance. Two key observations can be drawn from the results.
First, the proposed PASS architecture consistently outperforms the conventional fixed-array benchmark across all multiple access schemes. This superiority demonstrates the exceptional capability of PASS in combating large-scale fading. By flexibly adjusting the PAs along the waveguide, PASS can effectively shorten the communication distances, thereby establishing strong LoS links for distributed users. Notably, the performance gap between PASS and the fixed-array becomes more pronounced as $D_{\rm x}$ increases. This is because, for the fixed-array system, expanding the service region inevitably increases the average distance between the array and the users, which leads to severe path loss degradation. In contrast, the geometric reconfigurability of PASS allows the antennas to track and move closer to the users, which maintains a relatively short average distance even in larger service areas. This highlights the robustness of PASS in larger service regions.

Second, it is observed that the multicast rate of PASS also exhibits a slight decreasing trend with the increase of $D_{\rm x}$. This arises from the tradeoff between spatial coverage and array gain. As the user distribution becomes more sparse over a larger $D_{\rm x}$, the PAs are forced to disperse more widely to satisfy the MMF criterion among distinct groups. Consequently, the ability of the PAs to concentrate energy in a specific user direction is weakened, and the pinching array gain for each user is diminished, which results in a slight degradation in the achievable rate.
\vspace{-5pt}
\subsection{Multicast Rate versus the Number of PAs}
\begin{figure}[!t]
\centering
\includegraphics[height=0.26\textwidth]{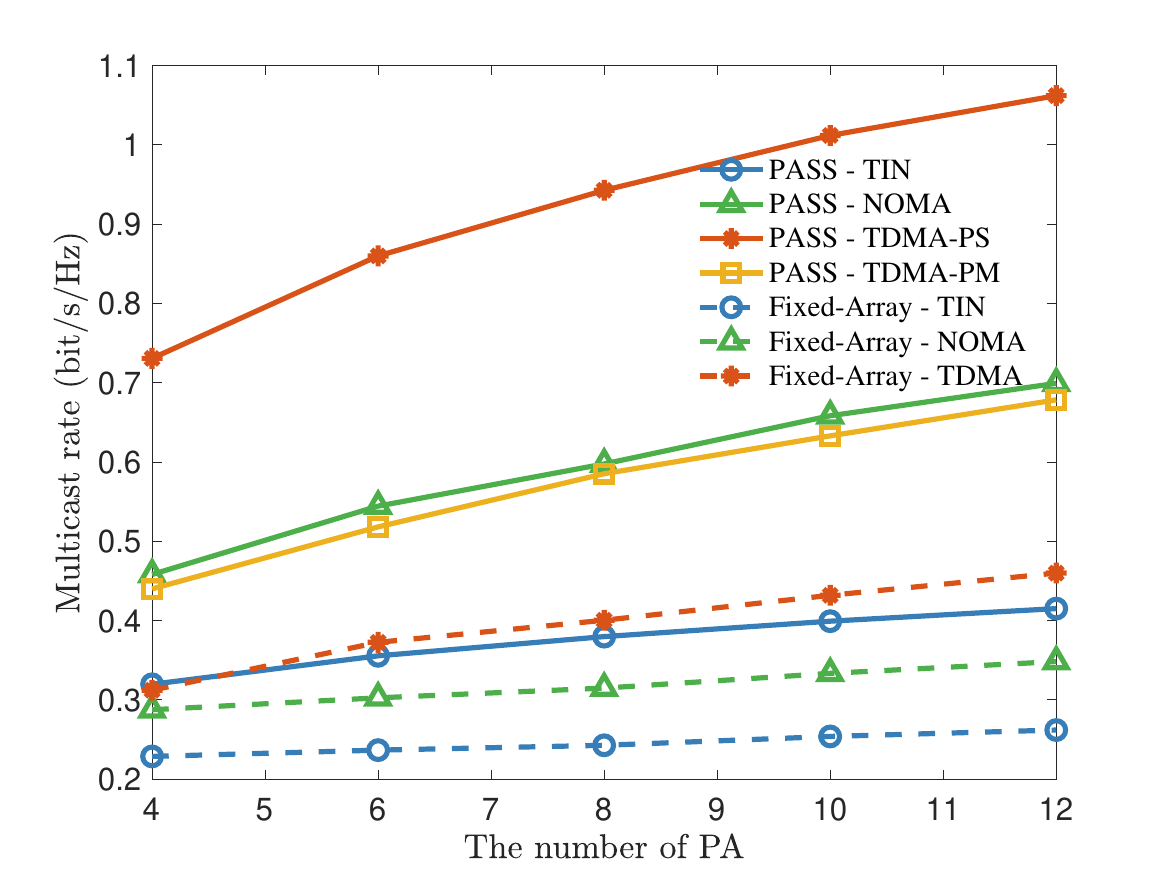}
\caption{Multicast rate versus the number of PAs with $D_{\rm x} = 20$ m, $G=3$, $K=12$, and $P_{\rm t}=-10$~dBm.}
\label{Fig:PA}
\vspace{-5pt}
\end{figure}
Fig.~\ref{Fig:PA} illustrates the impact of the number of PAs on the multicast rate. Two key insights can be derived from the results. First, a monotonic increase in achievable rate is observed across all schemes as $N$ increases. This trend verifies that, under the MMF formulation, a larger number of PAs allow PASS to more effectively balance the channel conditions among all users. Specifically, the PASS architecture gains stronger capabilities to reconstruct the array geometry with more PAs, which establishes robust LoS links for the users.

Second, the performance gap between PASS and the fixed-array system enlarges significantly as $N$ increases. This demonstrates the superior capability of PASS in combating large-scale fading. Unlike fixed arrays that rely solely on beamforming gain, PASS leverages the increased number of movable PAs to flexibly shorten the average communication distance. Consequently, the advantage of PASS becomes more significant with a larger number of PAs.

\subsection{Multicast Rate versus the Number of Multicast Group}
\begin{figure}[!t]
\centering
\includegraphics[height=0.26\textwidth]{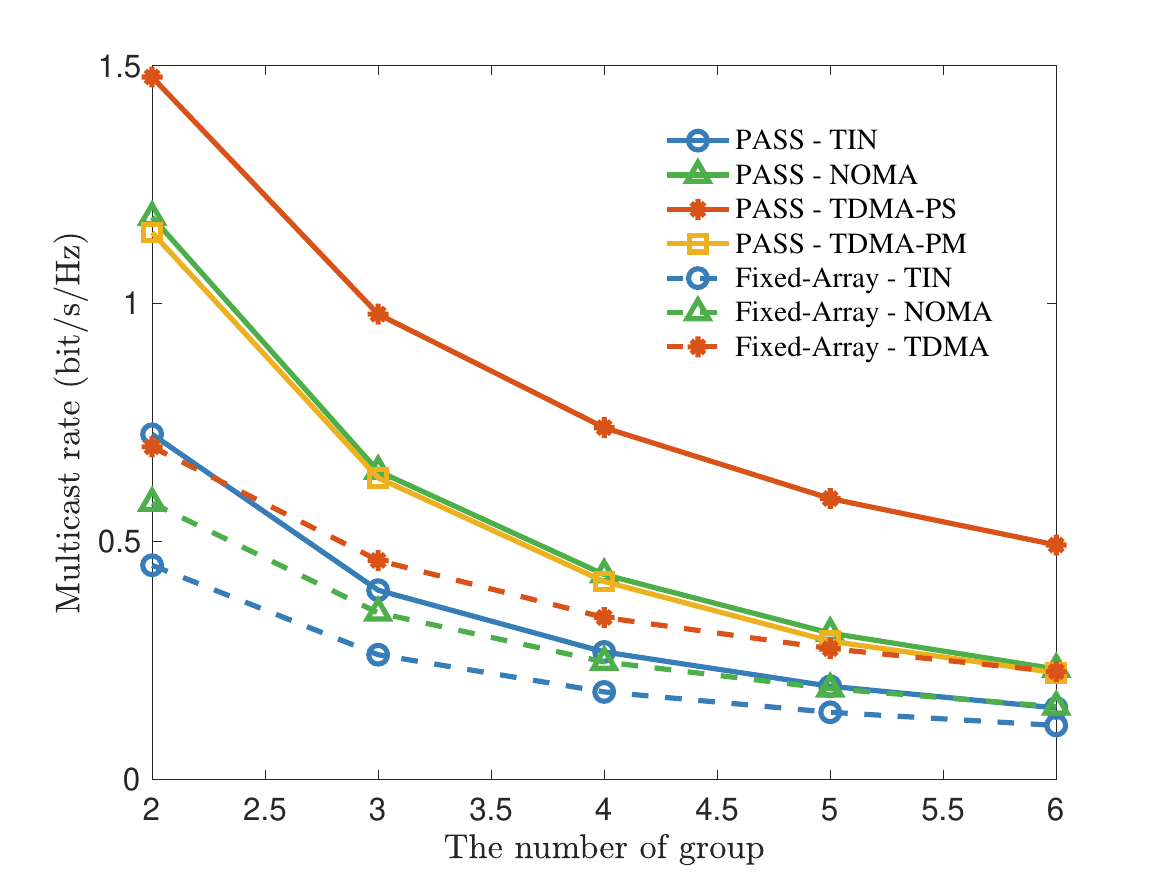}
\caption{Multicast rate versus the number of multicast groups with $D_{\rm x} = 20$~m, $N = 10$, $K=4$ in each group, and $P_{\rm t}=-10$~dBm.}
\label{Fig:Group}
\vspace{-5pt}
\end{figure}
Fig.~\ref{Fig:Group} depicts the impact of the number of user groups on the multicast rate. As anticipated, the system performance for all schemes exhibits a monotonic decline as $G$ increases. For the TDMA schemes, the performance deterioration is primarily due to the reduction in time resources allocated to each group. Regarding the TIN scheme, the performance decline arises from the intensification of inter-group interference. Finally, the performance degradation for NOMA is caused by spatial energy dispersion. Since NOMA serves all groups simultaneously, increasing $G$ forces the PAs to distribute signal energy across more directions to ensure fairness. This attenuates the  pinching array gain allocated to specific users, which inevitably degrades the signal strength for each group.

\section{Conclusion}\label{sec:conclusion}
This paper investigated the max-min fairness problem in PASS-enabled multigroup multicast systems under three MA schemes: TIN, NOMA, and TDMA. For the TIN scheme, we derived a closed-form optimal power allocation together with an analytical upper bound on the achievable rate, and showed that the MMF-optimal PA placement is obtained by maximizing the harmonic mean of the channel gains. For NOMA and TDMA, we proposed HOE-enhanced sequential optimization frameworks that jointly design PA locations and resource allocation with low computational complexity. Numerical results verified the effectiveness of the proposed algorithms and revealed that PASS offers clear gains over fixed-array baselines. Among the three schemes,TDMA-PS delivers the strongest performance by eliminating inter-group interference and fully exploiting group-wise spatial reconfigurability. NOMA consistently outperforms TDMA-PM and tends to be preferable to TDMA-PS in high-power heterogeneous-group scenarios, where its interference management capability yields pronounced fairness gains. In contrast, TIN acts as a simple lower-bound benchmark that remains attractive for systems constrained by receiver complexity or hardware cost.

\appendices
\section{Proof of Lemma~\ref{prop:closed_form_power_TIN}} \label{app:proof_prop1}

With fixed PA locations $\mathbf{x}$, the effective group channel gains $\{A_g(\mathbf{x})\}_{g=1}^G$ are constant. For notational brevity, we omit the dependency on $\mathbf{x}$ in this proof. The MMF power allocation problem can be equivalently reformulated by introducing a slack variable $t$ representing the minimum SINR as follows:
\begin{align}\label{prob:app_mmf}
\max_{t, \,\mathbf{p}}\ t, \quad\text{s.t.}\ \  \frac{P_g A_g}{(P_{\rm t} - P_g) A_g + 1} \ge t,  \forall g\in\mathcal{G}, \ \eqref{st:power}.
\end{align}
The Lagrangian function of~\eqref{prob:app_mmf} is formulated as follows:
\begin{align}
\mathcal{L}(t, \mathbf{p}, \boldsymbol{\lambda}, \mu) \!=\!& -t \!+\!\! \sum\nolimits_{g=1}^{G} \!\!\lambda_g \Big( t (P_{\rm t} A_g + 1) \!-\! P_g A_g (1+t) \Big) \notag\\
&+ \mu \Big( \sum\nolimits_{g=1}^{G} P_g - P_{\rm t} \Big),
\end{align}
where $\boldsymbol{\lambda} = [\lambda_1, \dots, \lambda_G]^{\sf T}$ with $\lambda_g>0, \forall g\in{\mathcal G}$, and $\mu \ge 0$ are the Lagrange multipliers associated with the SINR constraints and the total power constraint, respectively. Since the original problem is a convex optimization problem, the Karush-Kuhn-Tucker (KKT) conditions are necessary and sufficient for optimality. The KKT conditions are given by
\begin{subequations}
\begin{align}
&\frac{\partial \mathcal{L}}{\partial t} = -1 + \sum\nolimits_{g=1}^{G} \lambda_g \Big( P_{\rm t} A_g + 1 - P_g A_g \Big) = 0, \label{eq:kkt_t} \\
&\frac{\partial \mathcal{L}}{\partial P_g} = -\lambda_g A_g (1+t) + \mu = 0, \quad \forall g\in{\mathcal G}, \label{eq:kkt_p} \\
&\mu \Big( \sum\nolimits_{g=1}^{G} P_g - P_{\rm t} \Big) = 0, \label{eq:kkt_slack_mu}\\
&\lambda_g \Big( t (P_{\rm t} A_g + 1) - P_g A_g (1+t) \Big) = 0, \quad \forall g\in{\mathcal G}. \label{eq:kkt_slack_lambda}
\end{align}
\end{subequations}

From \eqref{eq:kkt_p}, we have $\lambda_g = \frac{\mu}{A_g(1+t)}$. Since $A_g > 0$ and $t > 0$, a strictly positive power budget $P_{\rm t} > 0$ implies that the power constraint must be active at the optimum, i.e., $\mu > 0$. Consequently, we have $\lambda_g > 0$ for all $g \in \mathcal{G}$. 
The strict positivity of $\lambda_g$ invokes the complementary slackness condition in \eqref{eq:kkt_slack_lambda}, which forces the SINR constraints to be active for all groups. Thus, all groups achieve the equalized SINR $t$, which can be given by
\begin{equation}
P_g = \frac{t}{1+t} \left( P_{\rm t} + {A^{-1}_g} \right), \quad \forall g\in{\mathcal G}.
\end{equation}
Substituting the optimal $P_g$ into the active total power constraint~\eqref{st:power} yields
\begin{equation}\label{eq:t_before}
\sum\nolimits_{g=1}^{G} \Big( \frac{t}{1+t} \left( P_{\rm t} + {A^{-1}_g} \right) \Big) = P_{\rm t}.
\end{equation}
Finally, we obtain
\begin{align}
	t = \Big(\sum\nolimits_{g=1}^{G} \left( 1 +{P^{-1}_{\rm t} A^{-1}_g} \right) - 1\Big)^{-1}.
\end{align}
Solving for $t$ yields the optimal SINR $\gamma^{\sf TIN, \star}(\mathbf{x})$ in \eqref{eq:opt_gamma}, and substituting $t$ back into the expression for $P_g$ yields \eqref{eq:opt_power}. This completes the proof.

\section{Proof of Lemma~\ref{prop:G2_closed_form}} \label{app:proof_prop_G2}

According to the MMF criterion, the optimal solution satisfies two conditions: 1) the total power constraint is active, i.e., $P_s + P_w = P_{\rm t}$; and 2) the strong and weak groups achieve identical SINRs.
Equating the SINR of the strong group and the SNR of weak group yields
\begin{equation}\label{eq:sinr_eq}
P_s A_s = \frac{P_w A_w}{P_s A_w + 1}.
\end{equation}
Substituting $P_w = P_{\rm t} - P_s$ into \eqref{eq:sinr_eq}, we obtain
\begin{equation}
P_s A_s (P_s A_w + 1) = (P_{\rm t} - P_s) A_w.
\end{equation}It follows that $(A_s A_w) P_s^2 + (A_s + A_w) P_s - P_{\rm t} A_w = 0$. Since $P_s\geq 0$, the feasible solution can be given by
\begin{equation}
P_s^{\star} = \frac{-(A_s + A_w) + \sqrt{(A_s + A_w)^2 + 4 P_{\rm t} A_s A_w^2}}{2 A_s A_w}.
\end{equation}
Finally, the weak group power is determined by $P_w^{\star} = P_{\rm t} - P_s^{\star}$. This completes the proof.
\vspace{-10pt}
\section{Proof of Lemma~\ref{prop:closed_form_power}}
\label{app:power_derivation}
Let the multicast groups be indexed by the decoding order, where group $g$ decodes before group $j$ for $g < j$. To achieve a target SINR $\gamma$, the transmit power $P_g$ for group $g$ must satisfy the following SINR condition:
\begin{align}\label{eq:app_sinr_constraints}
\frac{P_g A_g}{1 + A_g \sum_{j=g+1}^{G} P_j} \ge \gamma, \quad \forall g\in{\mathcal G},
\end{align}
where the interference term is zero for the last user $g=G$.

To minimize the total power, the inequality in \eqref{eq:app_sinr_constraints} must hold with equality. Let $S_g \triangleq \sum_{j=g}^{G} P_j$ denote the partial sum of power from group $g$ to $G$. Then, we have $P_g = S_g - S_{g+1}$ (with $S_{G+1}=0$) and $\sum_{j=g+1}^{G} P_j = S_{g+1}$.
Substituting these into \eqref{eq:app_sinr_constraints}, the condition can be rewritten as follows:
\begin{align}
\frac{(S_g - S_{g+1}) A_g}{1 + A_g S_{g+1}} = \gamma.
\end{align}
Rearranging the terms to establish a recursive relationship for $S_g$, which can be written as follows:
\begin{align}
(S_g - S_{g+1}) A_g &= \gamma (1 + A_g S_{g+1}) \notag \\
S_g &= \frac{\gamma}{A_g} + (1+\gamma) S_{g+1}. \label{eq:app_recursion}
\end{align}
Equation \eqref{eq:app_recursion} represents a linear recurrence relation. We verify the solution by backward induction starting from $g=G$:
\begin{itemize}
    \item For $g=G$, since $S_{G+1}=0$, we have $S_G = \frac{\gamma}{A_G}$. This matches the term for $k=0$ in the summation if we view the expansion from the end.
    \item For $g=G-1$, substituting $S_G$:
\begin{equation}
        S_{G-1} = \frac{\gamma}{A_{G-1}} + (1+\gamma)\frac{\gamma}{A_G}.
    \end{equation}
\end{itemize}
By recursively expanding \eqref{eq:app_recursion} from $g=1$, the total required power $P_{\rm req} = S_1$ is obtained as follows:
\begin{align}
P_{\sf req} &= \frac{\gamma}{A_1} + (1+\gamma)\left( \frac{\gamma}{A_2} + (1+\gamma)S_3 \right) \notag \\
&= \frac{\gamma}{A_1} + \frac{\gamma(1+\gamma)}{A_2} + \dots + \frac{\gamma(1+\gamma)^{G-1}}{A_G} \notag \\
&= \sum\nolimits_{g=1}^{G} \frac{\gamma (1+\gamma)^{g-1}}{A_g}.
\end{align}
This completes the proof.

\section{Proof of Lemma~\ref{prop:TDMA_closed_form}}\label{app:proof_prop_TDMA}

With a fixed PA location $x$ and equal time allocation $\tau_g=1/G$, the original MMF problem can be recast into the standard epigraph form. By introducing an auxiliary variable $t$ representing the equalized minimal rate, the problem is formulated as follows:
\begin{subequations}\label{prob:appendix_primal}
\begin{align}
\max_{ \{P_g\}, \, t } \quad & t \\
\text{s.t.} \quad & {G^{-1}}\log_2\left( 1 + P_g A_g(x) \right) \ge t, \quad \forall g\in{\mathcal G}, \label{eq:app_rate_constraint}\\
& \sum\nolimits_{g=1}^{G} {G^{-1}} P_g \le P_{\rm t}, \label{eq:app_power_constraint}
\end{align}
\end{subequations}
where \eqref{eq:app_power_constraint} corresponds to the average power constraint. 
The rate constraint \eqref{eq:app_rate_constraint} can be rearranged as $P_g \ge {A^{-1}_g(x)} (2^{Gt} - 1)$. 
The Lagrangian of problem \eqref{prob:appendix_primal} is given by
\begin{align}
\mathcal{L} \!=\! -t + \!\!\sum\nolimits_{g=1}^{G} \!\mu_g\!\! \left( \frac{2^{Gt} - 1}{A_g(x)} - P_g \right) \!+\! \lambda \!\left( \sum\nolimits_{g=1}^{G}\!\! P_g \!-\! G P_{\rm t}\! \right),
\end{align}
where $\mu_g \ge 0$ and $\lambda \ge 0$ are the Lagrange multipliers associated with the rate constraints and the total power constraint, respectively. 
The KKT conditions w.r.t. $P_g$ and $t$ are derived as follows:
\begin{subequations}
\begin{align}
& \frac{\partial \mathcal{L}}{\partial t} = -1 + G (\ln 2) 2^{Gt} \sum\nolimits_{g=1}^{G} \frac{\mu_g}{A_g(x)} = 0, \\
& \frac{\partial \mathcal{L}}{\partial P_g} = -\mu_g + \lambda = 0, \quad \forall g\in{\mathcal G}, \label{eq:deriv-Pg} \\
& \lambda \Big( \sum\nolimits_{g=1}^{G} P_g - G P_{\rm t} \Big) = 0.
\end{align}
\end{subequations}
From~\eqref{eq:deriv-Pg}, we have $\mu_g = \lambda$ for all $g$. Since the optimal MMF solution requires all rate constraints to be active, the terms inside the square brackets in the Lagrangian must be zero. This yields the relationship as follows:
\begin{equation}
P_g = {A^{-1}_g(x)} \left(2^{Gt} - 1\right).
\end{equation}
Let $C = 2^{Gt} - 1$. The above equation indicates that the optimal power $P_g$ is strictly proportional to the inverse of the effective channel gain $A^{-1}_g(x)$.
Substituting this into the active power constraint $\sum P_g = G P_{\rm t}$ and defining $f_A({x}) \triangleq \sum\nolimits_{g=1}^{G} {A^{-1}_g({x})}$, we obtain
\begin{equation}
C f_A({x}) = G P_{\rm t} \implies C = {G P_{\rm t}}{f^{-1}_A({x})}.
\end{equation}
Substituting $C$ back into the expression for $P_g$, the closed-form power allocation is given by 
\begin{equation}
P_g^{\star}(x) = {{A^{-1}_g(x)}}f^{-1}_A({x}) G P_{\rm t}.
\end{equation}
Finally, substituting $P_g^{\star}(x)$ back into the objective function, we obtain the maximized MMF rate as follows:
\begin{equation}
t^{\star} = {G^{-1}} \log_2 \left( 1 + {G P_{\rm t}}f^{-1}_A({x}) \right).
\end{equation}
Since $\log(\cdot)$ is a monotonically increasing function, maximizing $t^{\star}$ is equivalent to minimizing the denominator term $f_A({x})$. Thus, the optimal PA location $x^{\star}$ is determined by minimizing the sum of inverse effective CNRs, which completes the proof.

\bibliographystyle{IEEEtran} 
\bibliography{reference}    

\end{document}